%% file: bciot-be-ctde-arxiv.tex
\pdfoutput=1
\documentclass[journal]{IEEEtran}

\usepackage{cite}
\usepackage{graphicx}
\graphicspath{{fig/}}
\usepackage{amsmath}
\usepackage{algorithmic}
\usepackage{array}
\usepackage[caption=false,font=footnotesize]{subfig}
\usepackage{url}

\usepackage{amssymb}
\usepackage{amsthm}
\usepackage{multicol}
\usepackage{multirow}
\usepackage{mdwmath}
\usepackage{mdwtab}

\newtheorem{theorem}{Theorem}
\newtheorem{lemma}{Lemma}
\newtheorem{corollary}{Corollary}
\newtheorem{proposition}{Proposition}
\newtheorem{auss}{Assumption}

\usepackage{booktabs}
\usepackage{makecell}

\begin{document}

\title{BE-CTDE: Blockchain-Enabled Multi-Agent Reinforcement Learning with Anomaly Detection and Token Economics}

\author{Chih-Wei Huang, Hsin-Chieh Lee, Yun-Chen Yu, Ibrahim Althamary, Rongxing Lu, and Phone Lin%
\thanks{This work has been submitted to the IEEE for possible publication. Copyright may be transferred without notice, after which this version may no longer be accessible.}%
\thanks{(Corresponding author: Chih-Wei Huang.)}%
\thanks{Chih-Wei Huang, Hsin-Chieh Lee, and Yun-Chen Yu are with the Department of Communication Engineering, National Central University, Taoyuan City 320317, Taiwan (e-mail: cwhuang@ce.ncu.edu.tw; 112523042@cc.ncu.edu.tw; 113523038@cc.ncu.edu.tw).}%
\thanks{Ibrahim Althamary is with the Interdisciplinary Research Center for Intelligent Secure Systems, King Fahd University of Petroleum and Minerals (KFUPM), Dhahran 31261, Saudi Arabia (e-mail: ibrahim.thamary@kfupm.edu.sa).}%
\thanks{Rongxing Lu is with the School of Computing, Queen's University, Kingston, ON K7L 2N8, Canada (e-mail: rongxing.lu@queensu.ca).}%
\thanks{Phone Lin is with the Department of Computer Science and Information Engineering, the Graduate Institute of Networking and Multimedia, the Graduate Institute of Biomedical Electronics and Bioinformatics, the Internet of Things Research Center, and the Graduate Institute of Medical Device and Imaging, National Taiwan University, Taipei City 10617, Taiwan (e-mail: plin@csie.ntu.edu.tw).}%
}


\maketitle

\begin{abstract}
Multi-agent reinforcement learning supports cooperative decision-making in Internet of Things systems, but strategic deviations can disrupt coordination and undermine confidence in the training process. This paper proposes Blockchain-Enabled Centralized Training and Decentralized Execution (BE-CTDE), which uses blockchain as an active training-management layer. A behavioral verifier compares executed trajectories with shared-policy recommendations; its flags drive a token mechanism combining slashing, burning, rewards, and compensation. Retained balances determine each agent's contribution to the global policy update, connecting auditable behavior to learning influence. We derive a one-round incentive condition that accounts for action-dependent compensation and the payoff from deviation, and a finite expected exclusion-time bound under explicit conditional detection and accounting assumptions, showing perfect detection is unnecessary. An exact finite-population capacity bound characterizes the limits of the population-relative detector. In multi-agent path finding experiments, BE-CTDE outperforms a state-of-the-art learning-based path planner with substantially higher success rates under attack, even when adversaries form the majority. It also outperforms partially defended baselines in all configurations, with collision rates up to $35\%$ lower than under detection without economic consequences and up to $43\%$ lower than under conventional stake slashing. Together, these results characterize how behavioral verification and token feedback can support accountable cooperative learning.
\end{abstract}

\begin{IEEEkeywords}
Anomaly detection, blockchain, incentive mechanism, Internet of Things (IoT), multi-agent reinforcement learning (MARL), smart contracts, token economics.
\end{IEEEkeywords}

\section{Introduction}

The proliferation of the Internet of Things (IoT) drives applications such as smart logistics, urban traffic management, and autonomous drone swarms, in which many intelligent agents collaborate in dynamic environments to accomplish complex tasks~\cite{zhou2021smarts}. These interconnected agents share communication, storage, and computing resources for coordinated operations such as adaptive traffic-signal control, infrastructure monitoring, and synchronized drone missions~\cite{chen2024vehicle}. Achieving high performance, security, and reliability in such resource-constrained, interactive settings remains challenging, particularly amid data tampering, privacy breaches, and adversarial disruption~\cite{aouedi2024survey}.

To address decision-making complexity in these settings, Multi-Agent Reinforcement Learning (MARL) has emerged as an essential paradigm in which agents optimize resource scheduling and actions through continuous learning over shared information~\cite{he2023toward, gao2022survey}. A widely adopted framework is Centralized Training and Decentralized Execution (CTDE), which derives global insights from centralized training while preserving the distributed execution that real-world IoT latency and connectivity constraints demand. Learning-based multi-agent path finding (MAPF) commonly adopts this framework~\cite{alkazzi2024comprehensive}, as does the state-of-the-art SCRIMP planner~\cite{wang2023scrimp}.

The traditional training process of the CTDE global model introduces critical vulnerabilities. These include untraceable data that hinder transparency, inherent weaknesses in the centralized architecture such as lack of accountability and a single point of failure~\cite{gao2022survey}, and policy disobedience where agents may execute malicious actions~\cite{arnold2022only}. In open IoT deployments, these weaknesses are exacerbated by the broader threat landscape facing autonomous AI agents, including data poisoning, model evasion, and behavioral manipulation~\cite{deng2025ai}. Purely on-device training, conversely, loses the global context cooperative tasks require. What is needed is a training framework that distributes management of the learning process for transparency and fault tolerance without sacrificing the coordination benefits of centralized training.

Blockchain can provide a replicated management layer~\cite{khan2022internet, puthal2018everything}, in which commitments to model updates and interactions support an auditable coordination history~\cite{meijers2022blockchain, cao2022decentralized}. Commitments bind recorded inputs; they do not establish sensor truth or the correctness of off-chain inference. Such deployments commonly span several operators with no single party that all of them trust, which is the setting in which a replicated ledger repays its overhead. Trust is then bounded rather than eliminated, since the shared coordination infrastructure is trusted while the agents that participate in training are not. Blockchain alone, however, does not address policy disobedience, because it preserves submitted interaction records yet cannot determine whether recorded behavior reflects genuine cooperation or adversarial interference. Closing this loop requires behavioral verification and incentive alignment inside the training process itself, complementing recent advances in adversarial detection for cooperative multi-agent systems~\cite{li2023among, zhao2024made}.

Several architectures have been proposed at the convergence of blockchain and AI. Early frameworks pioneered decentralized machine learning on simulated blockchains, using token incentives to validate data quality~\cite{harris2019decentralized, harris2020analysis, weng2019deepchain}. Subsequent work added incentives for data and compute sharing on real chains but largely overlooked behavior verification and adversarial mitigation during training~\cite{zanardo2022learningchain, zhao2025blockchain}. In blockchain-integrated MARL specifically, recent efforts explore on-chain coordination for multi-object tracking, path planning in cognitive vehicular networks, and resource management for blockchain-based federated learning in IoT~\cite{shen2023blockchain, chang2023distributed, mishra2024novel}. The most directly comparable framework, LearnChain~\cite{sami2024learnchain}, makes MARL model updates traceable, but its record is compiled after the fact, and an agent logged as deviant still contributes to the shared model exactly as a compliant one does. Across this literature the ledger observes training rather than participating in it, so the step from detecting misbehavior to withholding its influence over the learned policy remains open.

The gap is therefore not auditability but consequence, because evidence of how an agent actually behaved is not carried back into the update that trains the shared policy. We close that loop with Blockchain-Enabled Centralized Training and Decentralized Execution (BE-CTDE), in which blockchain is an active training-management layer. The contributions are:
\begin{itemize}
    \item \textbf{Closed-loop BE-CTDE.} On-chain coordination and off-chain learning form an auditable training cycle: behavioral verification updates economic state, economic state determines aggregation weight, and insufficient balance ends an agent's training contribution.
    \item \textbf{Detection and aggregation.} A framework-agnostic verifier compares executed trajectories with shared-policy recommendations using a task-specific residual. The coupled token mechanism applies slashing, burning, rewards, and compensation, while retained balances weight the per-agent learning contributions. This makes detected deviations economically consequential without requiring gradient-space similarity between heterogeneous agents.
    \item \textbf{Conditional theoretical guarantees.} We derive the one-round incentive comparison with action-dependent compensation and deviation utility. Exclusion in finite expected time needs no perfect detector, only history-uniform detection above a threshold, at most $1/3$ in our setting even with full compensation, and the excluded adversary then contributes no new gradient. An exact capacity result makes the threshold hold for every retained population.
    \item \textbf{Implementation and MAPF evaluation.} A private Ethereum/Solidity implementation evaluates BE-CTDE across population scales, adversary ratios, and obstacle densities. BE-CTDE achieves higher success rates under attack than the state-of-the-art SCRIMP planner and, in every configuration, lower collision rates than partially defended baselines, by up to $35\%$ relative to detection without economic consequences and up to $43\%$ relative to conventional stake slashing.
\end{itemize}

The rest of the paper is organized as follows. Section~\ref{sec:related-work} reviews related work. Section~\ref{sec:system-model} sets out the system model, threat model, and design problem, and Section~\ref{sec:detection-incentive} develops the detection, incentive, and aggregation mechanisms with their guarantees. Section~\ref{sec:implementation} describes the blockchain implementation, Section~\ref{sec:performance-evaluation} reports the MAPF evaluation, and Section~\ref{sec:discussion-conclusion} concludes the paper.

\section{Related Work}\label{sec:related-work}

\subsection{Blockchain Foundations for IoT}
\label{sec:blockchain-foundations}

Blockchain replicates an append-only, cryptographically linked ledger across nodes, replacing trust in a central operator with consensus-backed immutability~\cite{puthal2018everything}, so transactions and model-update commitments can be audited by every participant when multiple IoT operators share infrastructure without a common trusted server. Proof of Work relies on energy-intensive computational competition, whereas Proof of Stake (PoS) uses stake-based validator selection; their energy requirements differ~\cite{wendl2023environmental}. Ethereum also supports deterministic smart contracts executed by the Ethereum Virtual Machine (EVM)~\cite{wohrer2018smart}, encoding deposits, verifiable artifacts, rewards, and penalties as an automated trust layer. BE-CTDE uses a permissioned Ethereum management layer to coordinate training while neural-network execution remains off-chain.

\subsection{AI and Blockchain Integration}
\label{sec:ai-blockchain-integration}

Research on decentralized AI increasingly uses blockchain as a coordination substrate. Table~\ref{tab:functional-comparison} compares representative systems along the dimensions that separate them from BE-CTDE.

\begin{table*}
\centering
\scriptsize
\setlength{\tabcolsep}{3pt}
\renewcommand{\arraystretch}{1.15}
\caption{Functional comparison of AI and blockchain integration across recent literature}
\begin{tabular}{c c c c c c c c c}
\toprule
\textbf{References} &
\textbf{Blockchain} &  
\textbf{MARL} & 
\makecell[c]{\textbf{On-chain}\\\textbf{Parameters}} & 
\makecell[c]{\textbf{On-chain}\\\textbf{Learning}} & 
\makecell[c]{\textbf{Incentive}\\\textbf{Mechanism}} & 
\makecell[c]{\textbf{Token-based}\\\textbf{Economy}} & 
\makecell[c]{\textbf{Data}\\\textbf{Verification}} & 
\makecell[c]{\textbf{Malicious} \\\textbf{Agent}\\\textbf{Detection}} \\
\midrule
\cite{harris2019decentralized, harris2020analysis} 
& Simulator & – & \checkmark & \checkmark & \checkmark & \checkmark & \checkmark & – \\
\hline
\cite{zhao2025blockchain} 
& Simulator & – &  \checkmark& – & \checkmark & \checkmark & \checkmark & \checkmark \\
\hline
\cite{weng2019deepchain} 
& Real & – &  \checkmark& – & \checkmark & \checkmark & \checkmark & \checkmark \\
\hline
\cite{shen2023blockchain} 
& Simulator  & \checkmark &  \checkmark &  –& \checkmark & –  & \checkmark & – \\
\hline
\cite{zanardo2022learningchain} 
& Real   & – & \checkmark & \checkmark & \checkmark & \checkmark & – & – \\
\hline
\cite{chang2023distributed} 
& Simulator  & \checkmark &\checkmark  &–  & \checkmark & – & \checkmark & \checkmark \\
\hline
\cite{mishra2024novel} 
& Simulator  & – & \checkmark & – & \checkmark & \checkmark & \checkmark & – \\
\hline
\cite{sami2024learnchain} 
& Real  & \checkmark & \checkmark & \checkmark & \checkmark & – & \checkmark & – \\
\hline
\textbf{BE-CTDE}
& Ethereum & \checkmark & Hash & Audit & \checkmark & Ledger & Inputs & Flags \\
\bottomrule
\multicolumn{9}{@{}l}{\emph{On-chain learning} denotes anchored artifacts, not neural-network execution; BE-CTDE anchors accounting and input/policy commitments.}
\end{tabular}
\label{tab:functional-comparison}
\end{table*}

Early systems introduced token-incentivized data validation on Ethereum simulators~\cite{harris2019decentralized,harris2020analysis} and later extended it to auditable deep learning and resource sharing on real chains~\cite{weng2019deepchain,zanardo2022learningchain}. IoT-focused work primarily addresses federated learning, with privacy-preserving data sharing~\cite{liu2023pd2s}, blockchain resource management~\cite{mishra2024novel}, privacy plus incentives for the Internet of Vehicles~\cite{zhao2025blockchain}, and reputation-weighted token incentives for IoT ecosystems~\cite{kuang2026brite}. These systems address realistic deployment constraints but not MARL's cooperative dynamics, behavioral verifiability, or trajectory-level adversarial actions.

A smaller literature directly couples blockchain with MARL. Shen et al.~\cite{shen2023blockchain} coordinate collaborative multi-object tracking, and Chang et al.~\cite{chang2023distributed} address path planning and scheduling in cognitive vehicular networks. Both demonstrate feasibility but use offline or simulated chains and emphasize reward traceability. More recent work extends the pairing to computational task offloading in IoT~\cite{chen2026swarm} and to decentralized multi-agent path planning with federated updates~\cite{pasupathy2026decentralized}. LearnChain~\cite{sami2024learnchain}, the closest comparison, records MARL updates on a permissioned chain, yet the chain serves as an audit endpoint, so an agent's recorded conduct never alters its influence over the aggregated model.

Tian~\cite{tian2025blockchain} more directly records agent behavior and automates penalties and rewards in a smart contract. BE-CTDE differs in both its evidence and its feedback: a trajectory residual compares policy-recommended and executed actions, and the resulting token balance enters the gradient step as an aggregation weight. Thus, detection changes economic state, economic state changes influence, and sustained balance decay triggers exclusion. LearnChain records updates without inferring behavior, while Zhao et al.~\cite{zhao2025blockchain} slash stake without feeding the penalty into the update rule. As Table~\ref{tab:functional-comparison} summarizes, BE-CTDE combines executable EVM accounting, residual-based detection (Section~\ref{sec:detection}), token accounting (Section~\ref{sec:incentive}), and token-weighted aggregation (Section~\ref{sec:aggregation}).
\subsection{Byzantine-Resilient Multi-Agent Learning and Incentive Design}
\label{sec:byzantine-related}

Robust gradient aggregation offers a different defense. Krum~\cite{blanchard2017krum} replaces averaging with a selection rule based on the $n-f-2$ nearest gradients and provides convergence under Byzantine workers; coordinate-wise median and trimmed-mean rules attain optimal statistical rates~\cite{yin2018byzantine}. Both families assume an honest majority, so their guarantees do not extend to the majority-adversary setting evaluated in Section~\ref{sec:performance-evaluation}. These methods identify adversaries through statistical deviation in parameter space, assuming honest gradients concentrate around a common direction. In MARL, however, agents face heterogeneous states, observations, and goals, so honest gradients need not concentrate. BE-CTDE instead evaluates each executed trajectory against the shared policy recommendation, a behavioral reference meaningful across distinct local sub-problems.

Work specific to cooperative MARL includes Li et al.'s Bayesian Adversarial Robust Dec-POMDP~\cite{li2024byzantine}, in which Byzantine agents are latent types and honest agents maintain beliefs about peers. This approach is theoretically rigorous but requires extensive belief tracking and supplies no economic mechanism for sustained honest behavior. BE-CTDE takes a complementary path: trajectory-level verification changes accounting state and gradient weights, with removal conditional on crossing an exclusion threshold.

BARD-MARL combines policy-graph evidence with Bayesian trust signals to diagnose compromised learned communication~\cite{murimi2026bard}. Wu and Klabjan study blockchain-based federated learning with loss-based softmax aggregation and a convergence result in a convex setting~\cite{wu2023robust}. These works distinguish detection and robust update selection from the closed feedback path studied here: movement evidence changes token accounts, accounts change gradient weights, and insufficient balance ends training contribution. Our experiments do not benchmark their algorithms or transfer their guarantees to MARL.

BE-CTDE also draws on mechanism design~\cite{myerson1981optimal} and incentive-compatible federated learning~\cite{tu2022incentive,xuan2024contract}. Section~\ref{sec:incentive} makes explicit the compensation comparison and external deviation payoff required for a one-round incentive result. A separate conditional drift argument concerns persistent disruption even when token rewards do not deter it.

\section{System Model and Problem Formulation} 
\label{sec:system-model}

\subsection{System Architecture}
\label{sec:architecture}

Figure~\ref{fig:architecture} presents BE-CTDE's two-layer architecture: IoT agents execute the learned policy, while a blockchain trust layer coordinates training and verifies behavior.

\begin{figure}
 \centering \includegraphics[width=\columnwidth]{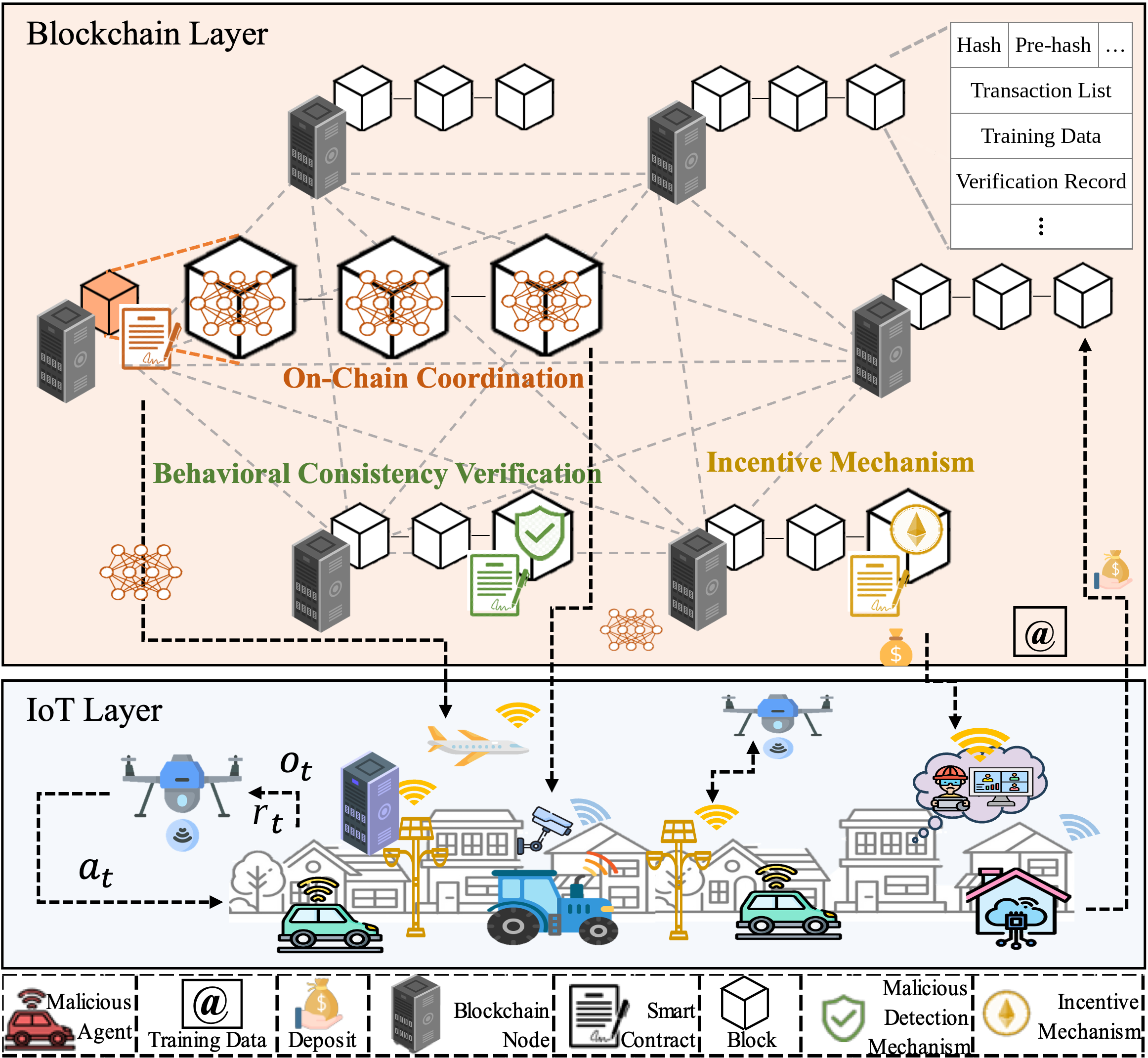}
 \caption{BE-CTDE system architecture: IoT agents interact with the environment through observations $o_t$, rewards $r_t$ and actions $a_t$, while the blockchain layer coordinates training, verifies behavior and manages token incentives. Neural-network computation remains off-chain.}
 \label{fig:architecture}
\end{figure}

The IoT layer supplies observations and executes a shared policy. During training, a trusted coordinator records physical transitions, computes the policy recommendation, and submits sampled records to the verification contract. The contract evaluates consistency and updates token accounts; retained balances determine off-chain gradient weights. This is an auditable training interface, not a proof that independently reported sensor data are truthful. Table~\ref{tab:notation} summarizes the notation.

\begin{table}[t]
\caption{Principal Notation}
\label{tab:notation}
\centering
\footnotesize
\renewcommand{\arraystretch}{1.05}
\setlength{\tabcolsep}{2pt}
\begin{tabular}{@{}>{\raggedright\arraybackslash}p{0.27\columnwidth}
                    >{\raggedright\arraybackslash}p{0.68\columnwidth}@{}}
\toprule
\textbf{Symbol} & \textbf{Definition} \\
\midrule
\multicolumn{2}{@{}l}{\textit{System and learning}} \\
$N,n_t,M,M_t$ & Initial, active, initially malicious, and active malicious counts ($M<N$) \\
$\mathcal K_t,\mathcal H,\mathcal M,f_i(t)$ & Retained, honest, malicious sets and detection flag \\
$\mathcal{S},\mathcal{A},\mathcal{O}$ & State, joint-action, and joint-observation spaces \\
$\Omega,\mathcal{P},\mathcal{R},\gamma$ & Observation map, transition/reward functions, and discount factor \\
$(o,a,r,m)_{i,t}$ & Observation, action, reward, and prior message of agent $i$ \\
$\pi_\theta,\theta,J(\theta)$ & Global policy, parameters, and executed return \\
$g_i(t),\alpha,T,E$ & Local gradient, learning rate, horizon, and episode count \\
\midrule
\multicolumn{2}{@{}l}{\textit{Detection and token economy}} \\
$a_{i,t}^{*},a_{i,t}^{\mathrm{exec}}$ & Policy-recommended and executed actions \\
$\phi_i,\rho$ & Outcome map and outcome metric \\
$d_{i,t},D_i$ & Per-step and cumulative behavioral residual \\
$h,L,\mathcal T$ & Sampling interval, episode length, and sampling instants \\
$\mu_t,\sigma_t,k,\tau_t$ & Active-population statistics, sensitivity, threshold \\
$F_{\max},\kappa_N,C_{\max}$ & Flag capacity, worst compensation ratio and bound \\
$P_{\mathrm{tp}},P_{\mathrm{fp}}$ & Detector true- and false-positive rates \\
$s,\beta$ & Deposit and reward factor \\
$\lambda,c_t$ & Compensation factor and per-recipient compensation \\
$S_i(t),S_{\min},\omega_i(t)$ & Token balance, exclusion threshold, and aggregation weight \\
$p_H,p_D,b_H,b_D$ & Conditional flag probabilities and unflagged gains, honest ($H$) versus deviation ($D$) \\
$\underline p,b_{\max},P^{\ast},\varepsilon$ & Detection floor, gain bound, detection threshold, and drift margin \\
$T_j^{\mathrm{ex}},\Delta,u_{\mathrm{dev}}$ & Exclusion time, incentive margin, and per-round deviation payoff \\
\bottomrule
\end{tabular}
\end{table}

\subsection{Cooperative Decision-Making under Partial Observability}
\label{sec:dec-pomdp}

The cooperative decision-making process across $N$ IoT agents is modeled as a Decentralized Partially Observable Markov Decision Process (Dec-POMDP) $\langle \mathcal{S}, \mathcal{A}, \mathcal{O}, \Omega, \mathcal{P}, \mathcal{R}, \gamma \rangle$. The state space $\mathcal{S}$ captures the global configuration of agents and environmental factors such as obstacles or resource loads; the joint action space $\mathcal{A} = \mathcal{A}_1 \times \cdots \times \mathcal{A}_N$ comprises the discrete actions $a_{i,t} \in \mathcal{A}_i$ of each agent, such as movement directions or resource bids. Since agents cannot perceive the global state directly, the observation function $\Omega: \mathcal{S} \rightarrow \mathcal{O}$ yields a local observation $o_{i,t} \in \mathcal{O}_i$ correlated with each agent's sensing range and communication history, with $\mathcal{O} = \mathcal{O}_1 \times \cdots \times \mathcal{O}_N$ the joint observation space. Dynamics follow the transition function $\mathcal{P}(s' \mid s, a)$, and agents jointly maximize the discounted return $\mathbb{E}\!\left[\sum_{t=0}^{T} \gamma^t r_t\right]$ under $\mathcal{R}: \mathcal{S} \times \mathcal{A} \rightarrow \mathbb{R}$, where $r_t$ aggregates task-specific extrinsic terms and intrinsic exploration bonuses and $\gamma \in [0, 1)$ is the discount factor.

Each agent learns a parameterized policy $\pi_\theta(a \mid o)$ executed in a decentralized manner. Following the CTDE paradigm, the policy parameters $\theta$ are updated centrally using globally aggregated information; in BE-CTDE, a trusted off-chain learner consumes the token-accounting outcomes. Recorded hashes identify the policy and inputs used; they do not prove that the learner performed correct optimization.

\subsection{Threat Model}
\label{sec:threat-model}

The intended deployment separates participating agents from a permissioned training coordinator and execution ledger. Up to $M<N$ agents can deviate from policy actions~\cite{arnold2022only, li2024byzantine}; the remaining agents sample or execute the prescribed policy. The model assumes authentic physical observations and an honest training coordinator. The permissioned ledger provides an auditable coordination history under its consensus assumptions~\cite{puthal2018everything}.

\emph{Adversarial capabilities and objective.} Attackers can replace their sampled actions with persistent blocking, occupation of another agent's goal, or repetitive motion~\cite{ivanova2021adversarial}. Their objective is disruption rather than token profit. Attacker identities are used to define the adversarial conditions and evaluate outcomes; the behavioral decision rule operates on trajectory evidence. Exclusion removes an agent's contribution to training; it does not physically remove the agent from the environment. Consequently, learning-layer containment does not guarantee that honest goals remain physically attainable.

\emph{Trust and scope.} The global policy is computed by an honest off-chain learner. A policy hash commits to its bytes, not to correct inference, and a transaction signature authenticates the submitting controller, not independent agent observations. The experiments assess learning under behavioral attacks; network consensus faults and validator incentives are outside this evaluation. Sybil attacks, fabricated sensor records, gradient-inversion attacks, collusion channels, and concealed nonstationary attacks are outside the evaluated threat model; see Section~\ref{sec:discussion-conclusion}.

\subsection{Design Problem}
\label{sec:design-problem}

The policy samples joint actions, but the environment advances under the executed joint action $a_{t}^{\mathrm{exec}}$, which the adversaries of Section~\ref{sec:threat-model} partly control. The objective is therefore the executed return
\begin{equation}
J(\theta) = \mathbb{E}\!\left[\sum_{t=0}^{T} \gamma^{t}\, \mathcal{R}\!\left(s_t,\, a_{t}^{\mathrm{exec}}\right)\right].
\label{eq:executed-return}
\end{equation}
Adversarial behavior can depress $J(\theta)$ through two channels: it perturbs the executed trajectory, and the resulting training data can distort the centralized update. BE-CTDE directly changes contribution to the second channel. It can improve learned responses to interference, but cannot physically remove an obstacle or guarantee access to an occupied goal.

The design couples behavioral consistency, token accounting and policy aggregation. Detection supplies a task-level reference for identifying deviations; the economic rule changes the cost of those deviations; and retained balances determine learning influence. The analysis asks when a deviation is deterred and when negative balance drift is sufficient for exclusion. The MAPF evaluation measures the resulting task performance across agent populations, adversary ratios and obstacle densities. Task metrics alone do not establish the conditional detection assumptions of the analysis.

\section{On-Chain Detection and Token-Based Incentive Mechanisms}
\label{sec:detection-incentive}

The blockchain layer governs training through sampled-input commitments, behavioral verification, and token incentives while high-frequency execution remains in the IoT layer.

\subsection{Training Workflow Overview}
\label{sec:training-workflow}

Figure~\ref{fig:workflow} summarizes the coupled on-chain/off-chain training workflow.

\begin{figure}
 \centering \includegraphics[width=\columnwidth]{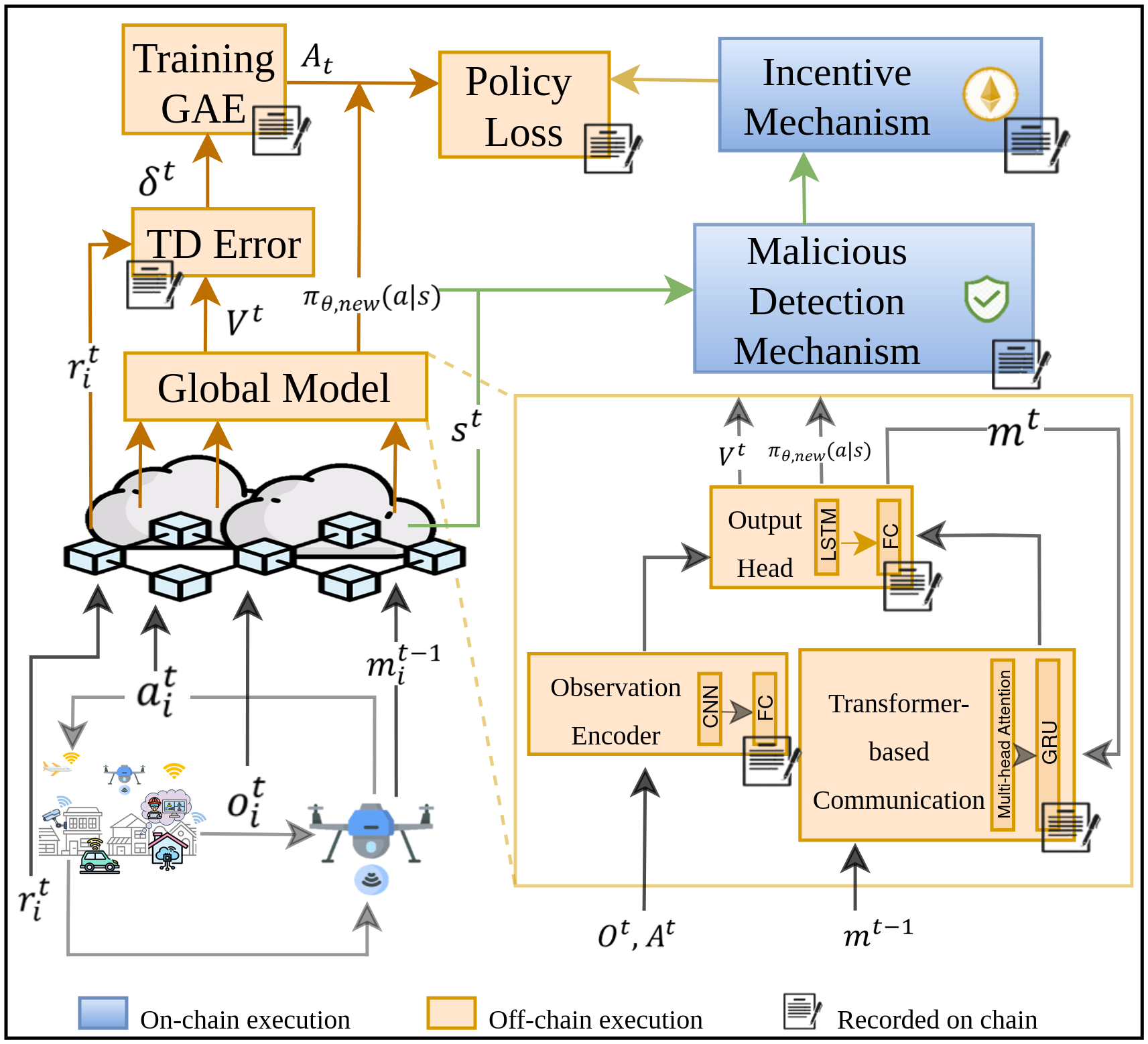}
 \caption{BE-CTDE training workflow and shared-policy architecture. Neural inference, communication, advantage estimation and PPO run off-chain (orange), while behavioral verification and token settlement run on-chain (blue); document icons mark results committed to the chain.}
 \label{fig:workflow}
\end{figure}

Each agent encodes its local observation $o_i^t$ together with the peer messages $m_i^{t-1}$, emits $m^t$, and reads a value estimate $V^t$ and an action distribution $\pi_{\theta,\mathrm{new}}(a\mid s)$ from the shared model. Every $h$ steps, the sampled state $s^t$ is recorded as the interaction evidence for behavioral verification, so an episode of length $L$ supplies at most $\lfloor L/h\rfloor$ samples. At every step, by contrast, the off-chain SCRIMP/PPO learner~\cite{wang2023scrimp} turns the executed actions $a_i^t$ and returned rewards $r_i^t$ into the TD error $\delta^t$ and the advantage $A_t$ that form the policy loss. Episode-end detection and incentive settlement then supply the weights applied to that loss in the next update, separating high-frequency inference from economic coordination. The analysis below concerns settlements with residual evidence; without evidence, its balance recursion makes no change.

\subsection{Framework-Agnostic Behavioral Consistency Verification}
\label{sec:detection}

Let $\mathcal K_t$ be the retained training participants before settlement $t$, with $n_t=|\mathcal K_t|\le N$. The shared policy supplies a reference action $a^*_{i,u}$, fixed by $\pi_\theta$ and the recorded observation $o_{i,u}$ alone. Verification compares what that recommendation would have produced with what the agent produced, through a task-specific outcome map $\phi_i$ carrying a state-action pair to an observable outcome, and a metric $\rho$ that compares two such outcomes. At each sampled environment step $u$, this reference and the action $a^{\rm exec}_{i,u}$ the agent executed are both evaluated at the state $s_u$ it actually occupied, giving the per-step behavioral residual $d_{i,u}$ and its episode total $D_i$,
\begin{equation}
d_{i,u}=\rho\!\left(\phi_i(s_u,a^{\rm exec}_{i,u}),\,\phi_i(s_u,a^{*}_{i,u})\right),
\quad D_i=\sum_{u\in\mathcal T}d_{i,u},
\label{eq:anomaly-score}
\end{equation}
where $\mathcal T=\{h,2h,\ldots,h\lfloor L/h\rfloor\}$ collects the sampling instants of an episode of length $L$. Any $\phi_i$ whose outcomes are recorded rather than asserted admits this construction, since a verifier then recomputes $d_{i,u}$ by arithmetic on committed values without re-executing the policy, and a bounded $\rho$ common to all agents keeps $D_i$ finite and comparable across the population. Residuals measure inconsistency with the reference rather than malicious intent, since an agent that samples its action, or whose outcome the environment alters, differs from the reference while acting honestly. Honest residuals therefore need not concentrate near zero.

On retained agents, define the population mean $\mu_t$ and population standard deviation $\sigma_t$ of $D_i$. The population-relative detector raises a deviation flag $f_i(t)$ when a retained agent's cumulative residual exceeds the threshold $\tau_t$,
\begin{equation}
\tau_t=\mu_t+k\sigma_t,\qquad
f_i(t)=\mathbf1\{i\in\mathcal K_t,\ D_i>\tau_t\},
\label{eq:threshold}
\end{equation}
with sensitivity $k>0$. The comparison is strict, so threshold ties and zero variance produce no flags. If $\mathcal T$ is empty, the analytical recursion skips that episode. With no retained agents, all weights are zero.

Since $\mu_t$ and $\sigma_t$ include the adversaries, deviating agents inflate the reference they are judged against, and this self-masking caps how many can be flagged at one settlement.

\begin{proposition}[Detector capacity]
\label{prop:capacity}
For $k>0$ and $n\ge1$, the maximum simultaneous flag count is
\begin{equation}
F_{\max}(n)=\left\lceil\frac{n}{1+k^2}\right\rceil-1.
\label{eq:capacity}
\end{equation}
Consequently, with $M_t>0$ adversaries still active, the realized flagged-adversary fraction is at most $\min\{1,F_{\max}(n)/M_t\}$. Simultaneous coverage of all $M_t$ requires $k<\sqrt{n/M_t-1}$.
\end{proposition}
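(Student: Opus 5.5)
We argue directly from the definition of the detector in~\eqref{eq:threshold}, treating the residuals $D_1,\dots,D_n$ of the retained agents as arbitrary reals. Let $F$ be the number of flagged agents, and write $x_i=D_i-\mu_t$, so that $\sum_i x_i=0$ and $\sum_i x_i^2=n\sigma_t^2$. A flag requires $x_i>k\sigma_t$, which is impossible when $\sigma_t=0$. So for $F\ge1$ we have $\sigma_t>0$, and every flagged $x_i$ is strictly positive.

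\textbf{Upper bound.} Let $\mathcal F$ be the flagged set and $\mathcal U$ its complement. Put $A=\sum_{i\in\mathcal F}x_i$. Because the deviations sum to zero, $\sum_{i\in\mathcal U}x_i=-A$. If $\mathcal U$ were empty, then $A=0$, which contradicts $A>Fk\sigma_t>0$; hence $|\mathcal U|=n-F\ge1$. Cauchy--Schwarz on each block gives
\[
\sum_{i\in\mathcal F}x_i^2\ge \frac{A^2}{F},\qquad \sum_{i\in\mathcal U}x_i^2\ge \frac{A^2}{n-F}.
\]
Adding these, $n\sigma_t^2\ge A^2\,n/\bigl(F(n-F)\bigr)$. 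Combining with $A>Fk\sigma_t$ yields $\sigma_t^2>F^2k^2\sigma_t^2/\bigl(F(n-F)\bigr)$. Dividing by $\sigma_t^2>0$ gives $n-F>Fk^2$, that is, $F<n/(1+k^2)$. The largest integer strictly below $n/(1+k^2)$ is $\lceil n/(1+k^2)\rceil-1$. The strict inequality is the point that needs care: it comes from the strict comparison in~\eqref{eq:threshold}, and it is what makes the ceiling-minus-one form correct even when $n/(1+k^2)$ is an integer.

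\textbf{Achievability.} Let $F$ be any integer with $1\le F<n/(1+k^2)$; the case $F=0$ is trivial. Take the equality configuration of Cauchy--Schwarz: set $x_i=a$ on $F$ agents and $x_i=-Fa/(n-F)$ on the remaining $n-F$. The mean of these deviations is zero, and
\[
\sigma_t^2=\frac{a^2F}{n-F}.
\]
The condition $a>k\sigma_t$ then reduces to $n-F>Fk^2$, which holds by the choice of $F$. So exactly $F$ agents are flagged, and the bound is attained. Since $1+k^2>1$, the bound $F_{\max}(n)<n$ holds automatically.

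\textbf{Consequences.} Every flagged adversary is a flagged agent, so the number of flagged adversaries is at most $\min\{M_t,F_{\max}(n)\}$, and dividing by $M_t$ gives the stated fraction. For simultaneous coverage, flagging all $M_t$ adversaries requires $M_t\le F_{\max}(n)$, which by the upper bound means $M_t<n/(1+k^2)$. This rearranges to $k^2<n/M_t-1$, that is, $k<\sqrt{n/M_t-1}$; the condition is necessary and needs $M_t<n$. The main obstacle is the boundary bookkeeping, namely the strictness, zero variance, and an empty unflagged set. All of these are dispatched by the strict threshold and the observation that $\sigma_t>0$ whenever $F\ge1$.
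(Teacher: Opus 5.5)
Your proof is correct and takes essentially the same route as the paper: you center the residuals, apply Cauchy--Schwarz to the flagged and unflagged blocks to get $n-F>k^2F$, exhibit a two-level extremal configuration to show the bound is attained, and use the same counting argument for coverage. Your attaining configuration ($a$ on $F$ agents, $-Fa/(n-F)$ on the rest) is the paper's ($\delta$ on $F$ agents, $0$ elsewhere) up to a common shift, which leaves every flag unchanged because $\tau_t=\mu_t+k\sigma_t$ moves with $\mu_t$; that shift also lets the configuration be realized with the nonnegative residuals a metric-valued $D_i$ requires.
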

\begin{proof}
The idea is that flagged agents raise the mean and spread they are judged against, so only a limited number can exceed $\mu+k\sigma$ at once. Fix a settlement, drop the index $t$, and write $x_i=D_i-\mu$ for the centered residuals of the $n$ retained agents, so that $\sum_ix_i=0$ and $\sum_ix_i^2=n\sigma^2$.

\emph{Upper bound.} Let $F\ge1$ agents be flagged, so each has $x_i>k\sigma$. Then $\sigma>0$, because $\sigma=0$ makes every $x_i$ zero. The flagged residuals add up to some $a>Fk\sigma$, so the remaining $n-F$ add up to $-a$, and hence $F<n$. By Cauchy--Schwarz, $m$ numbers with sum $A$ have squares summing to at least $A^2/m$. Applying this to both groups,
\[
n\sigma^2\ge\frac{a^2}{F}+\frac{a^2}{n-F}=\frac{a^2n}{F(n-F)}>\frac{k^2\sigma^2Fn}{n-F}.
\]
Dividing by $n\sigma^2$ and multiplying by $n-F$ gives $n-F>k^2F$, that is, $F<n/(1+k^2)$. The largest integer strictly below $n/(1+k^2)$ is $F_{\max}(n)$.

\emph{The bound is reached.} For any $F<n/(1+k^2)$, give $F$ agents residual $\delta>0$ and the rest zero. Then $\sigma=\delta\sqrt{F(n-F)}/n$, and the high agents' centered residual $(n-F)\delta/n$ exceeds $k\sigma$ exactly when $n-F>k^2F$, so exactly $F$ agents are flagged. Equal residuals give $F=0$.

\emph{Adversary coverage.} Flagged adversaries are among the flagged agents, so at most $F_{\max}(n)$ of the $M_t$ are flagged. Flagging all of them needs $F_{\max}(n)\ge M_t$, which holds exactly when $n/(1+k^2)>M_t$, that is, $k<\sqrt{n/M_t-1}$.
\end{proof}

\input{revision_generated/detector_capacity_table.tex}

Table~\ref{tab:detector-capacity} gives the initial ceilings. At $n=8$ at most two agents can be flagged, bounding the true-positive fraction by $2/3$ or $2/5$ for three or five adversaries, and no flags are possible once $n\le3$. At the $5/8$ ratio the ceiling stays below $1/2$ for every evaluated $N$, so a majority adversary population is never more than half flagged in one round and separation is not guaranteed. The verifier needs only residuals and eligibility, not ground-truth labels, and it reads $D_i$ only through its standing in the population, as do the token and aggregation rules, so any admissible $(\phi_i,\rho)$ leaves them unchanged.

\subsection{Token-Based Incentive Mechanism}
\label{sec:incentive}

A retained agent self-funds stake $s$, forfeiting it when flagged. Unflagged agents recover their principal and receive reward $\beta s$ plus equal compensation from the fraction $\lambda$ of slashed deposits. With $F_t=\sum_{i\in\mathcal K_t}f_i(t)$ flags, the per-recipient compensation is
\[
c_t=\frac{\lambda sF_t}{n_t-F_t}.
\]
The fraction $1-\lambda$ of slashed stake is burned; rewards are minted. With integer tokens, rounding compensation downward and burning the residue preserves the upper bounds below. For a retained agent $i\in\mathcal K_t$ in an evidence-bearing settlement,
\begin{equation}
S_i(t+1)=S_i(t)+\begin{cases}\beta s+c_t,&f_i(t)=0,\\-s,&f_i(t)=1.\end{cases}
\label{eq:token-update}
\end{equation}
An agent is permanently excluded precisely when $S_i<S_{\min}$; equality remains eligible. Excluded balances are frozen, and their aggregation weights are zero. Economic exclusion concerns training influence; it does not itself remove a physical agent. Section~\ref{sec:experimental-setup} gives the evaluation parameters; the following theorem separately states the conditions sufficient for a continued, self-funded staking process.

\begin{lemma}[Conditional one-round drift]
\label{lem:drift}
At an evidence-bearing settlement, given pre-settlement history $\mathcal F_t$, let $p_i(t)$ be the conditional flag probability and $b_i(t)=\beta s+\mathbb E[c_t\mid\mathcal F_t,f_i(t)=0]$. Then
\begin{equation}
\mathbb E[\Delta S_i\mid\mathcal F_t]=(1-p_i(t))b_i(t)-p_i(t)s.
\label{eq:drift}
\end{equation}
When the unflagged event has probability zero, set $b_i(t)=\beta s$; its weighted gain term is then zero. The expression is nonnegative exactly when $p_i(t)\le b_i(t)/(s+b_i(t))$.
\end{lemma}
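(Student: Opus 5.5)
The plan is to condition on the flag outcome and apply the tower property to the update rule \eqref{eq:token-update}. The settlement is evidence-bearing and $i\in\mathcal K_t$, so the increment $\Delta S_i=S_i(t+1)-S_i(t)$ equals $\beta s+c_t$ on $\{f_i(t)=0\}$ and $-s$ on $\{f_i(t)=1\}$. We can therefore write
\[
\Delta S_i=(1-f_i(t))(\beta s+c_t)-f_i(t)s .
\]
Both terms are integrable. The term $f_i(t)s$ is bounded. The compensation satisfies $0\le c_t\le \lambda s F_t/(n_t-F_t)$ on the event that some agent is unflagged. By Proposition~\ref{prop:capacity}, $F_t\le F_{\max}(n_t)<n_t$, so this bound is at most $\lambda s(n_t-1)$ and hence finite. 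So conditional expectations given $\mathcal F_t$ are well defined. On the event where every retained agent is flagged, $c_t$ is undefined, but it is multiplied by $1-f_i(t)=0$ there, so the product is taken to be zero.

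Next take $\mathbb E[\cdot\mid\mathcal F_t]$. The flagged term gives $p_i(t)s$ by definition of $p_i(t)=\mathbb P(f_i(t)=1\mid\mathcal F_t)$. For the unflagged term, the key step is the elementary identity
\[
\mathbb E[(1-f_i)X\mid\mathcal F_t]=\mathbb P(f_i=0\mid\mathcal F_t)\,\mathbb E[X\mid\mathcal F_t,f_i=0],
\]
with $X=\beta s+c_t$. This is the definition of conditional expectation given $\mathcal F_t\vee\sigma(f_i(t))$, restricted to the atom $\{f_i=0\}$. It yields $(1-p_i(t))\,b_i(t)$ and hence \eqref{eq:drift}.

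The degenerate case is where I expect the only real care to be needed. When $\mathbb P(f_i(t)=0\mid\mathcal F_t)=0$, the conditional expectation given the null event is not determined. The left-hand side of the identity is then zero anyway, since $(1-f_i)X=0$ almost surely. Any finite choice of $b_i(t)$, in particular the stated convention $b_i(t)=\beta s$, makes the weighted term $(1-p_i)b_i=0$, so \eqref{eq:drift} still holds. I would state this explicitly, working on $\mathcal F_t$-measurable versions so that the case split is pathwise in the history.

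Finally, the sign condition is algebra: $(1-p)b-ps\ge0$ is equivalent to $b\ge p(s+b)$. Here $b_i(t)\ge\beta s\ge0$, and $s>0$ for a meaningful deposit, so $s+b_i(t)>0$. Dividing by it gives $p_i(t)\le b_i(t)/(s+b_i(t))$, and the reverse implication holds by the same steps, which gives the equivalence.
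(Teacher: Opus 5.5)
Your proposal is correct and takes essentially the paper's route: split the increment on the flag event, weight the conditional unflagged gain $b_i(t)$ by $1-p_i(t)$ with no independence assumption, use the convention $b_i(t)=\beta s$ when the unflagged event is null, and rearrange using $s+b_i(t)>0$. Your integrability check via Proposition~\ref{prop:capacity} is extra rigor the paper leaves implicit; since that bound gives $F_t<n_t$, it also shows the all-flagged event is empty, so your separate remark about that event is unnecessary but harmless.
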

\begin{proof}
By~\eqref{eq:token-update}, a retained balance changes in only two ways: it gains $\beta s+c_t$ when unflagged and loses $s$ when flagged. The loss occurs with conditional probability $p_i(t)$ and contributes $-p_i(t)s$. The gain occurs with probability $1-p_i(t)$, and its average size given that the agent is unflagged is $b_i(t)$, so it contributes $(1-p_i(t))b_i(t)$. Adding the two gives~\eqref{eq:drift}. If $p_i(t)=1$, the gain never occurs, and the convention $b_i(t)=\beta s$ keeps its term at zero. The compensation $c_t$ depends on every agent's flag, but it enters only through this conditional average, so no independence between agents is needed. Finally, since $s+b_i(t)>0$, the drift is nonnegative exactly when $b_i(t)\ge p_i(t)(s+b_i(t))$, that is, $p_i(t)\le b_i(t)/(s+b_i(t))$.
\end{proof}

\begin{theorem}[One-round incentive compatibility]
\label{thm:incentive-compatibility}
At a fixed history $\mathcal F_t$, compare honest execution and a specified deviation. The subscripts $H$ and $D$ evaluate the conditional flag probability $p_i(t)$ and conditional unflagged gain $b_i(t)$ of Lemma~\ref{lem:drift} under honest execution and under the deviation, respectively, and $u_{\rm dev}\ge0$ is the additional deviation utility in token-equivalent units. For an agent maximizing expected one-round utility, honest execution strictly dominates this deviation if and only if
\begin{equation}
(p_D-p_H)s+(1-p_H)b_H-(1-p_D)b_D>u_{\rm dev}.
\label{eq:ic-condition}
\end{equation}
\end{theorem}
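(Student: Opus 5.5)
The proof reduces to comparing two applications of Lemma~\ref{lem:drift} at the same history $\mathcal F_t$. First I will fix the agent's one-round utility for each option. Under honest execution the utility is the expected token change $U_H=\mathbb E_H[\Delta S_i\mid\mathcal F_t]$. Under the specified deviation it is $U_D=\mathbb E_D[\Delta S_i\mid\mathcal F_t]+u_{\rm dev}$, because $u_{\rm dev}$ is already expressed in token-equivalent units and simply adds to the balance change.

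I will then apply Lemma~\ref{lem:drift} separately under each execution regime. The lemma's argument conditions only on the history and on the agent's own flag event, and needs no independence across agents. It therefore holds verbatim whether the conditional laws are those induced by honest play or by the deviation. This gives $U_H=(1-p_H)b_H-p_Hs$ and $U_D=(1-p_D)b_D-p_Ds+u_{\rm dev}$. The degenerate convention of the lemma carries over: if $p_H=1$ or $p_D=1$, the corresponding gain term vanishes whatever $b$ is set to, so both formulas remain valid in the boundary cases.

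Strict dominance of honest execution for an expected-utility maximizer means exactly $U_H>U_D$. Subtracting the two expressions and rearranging gives $(p_D-p_H)s+(1-p_H)b_H-(1-p_D)b_D>u_{\rm dev}$, which is \eqref{eq:ic-condition}. Every step is an equivalence, so both the "if" and the "only if" directions follow at once.

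The only delicate point is justifying that the deviation is evaluated with its own $b_D$. The compensation $c_t$ depends on the other agents' flags, and those flags may themselves shift when agent $i$ deviates, since the population statistics $\mu_t,\sigma_t$ include $D_i$. I will handle this by noting that $p_D$ and $b_D$ are defined as the conditional quantities under the deviation. All such strategic and self-masking effects are therefore absorbed into them, and Lemma~\ref{lem:drift} requires no further structure. I will also note explicitly that the theorem is a one-round comparison at a fixed history: it makes no claim about continuation values such as future exclusion or the aggregation weight.
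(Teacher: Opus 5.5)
Your proposal is correct and matches the paper's proof essentially step for step: apply Lemma~\ref{lem:drift} under each execution regime, add $u_{\rm dev}$ to the deviation's expected token change, and rearrange the strict inequality, noting that every step is reversible. Your extra remarks on the $p=1$ convention and on peer-flag and compensation effects being absorbed into $p_D,b_D$ are sound and agree with the paper's surrounding discussion.
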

\begin{proof}
By Lemma~\ref{lem:drift}, the expected one-round token change is $(1-p_H)b_H-p_Hs$ under honest execution and $(1-p_D)b_D-p_Ds$ under the deviation, which additionally earns $u_{\rm dev}$. Honest execution is strictly better exactly when
\[
(1-p_H)b_H-p_Hs>(1-p_D)b_D-p_Ds+u_{\rm dev}.
\]
Grouping the two deposit terms as $(p_D-p_H)s$ turns this inequality into~\eqref{eq:ic-condition}. Each step can be reversed, so the condition is both necessary and sufficient.
\end{proof}
\begin{corollary}[Common-gain incentive margin]
\label{cor:incentive-margin}
If additionally $b_H=b_D=b$, the strict-deterrence boundary is
\begin{equation}
\Delta=(p_D-p_H)(b+s),\qquad u_{\rm dev}<\Delta.
\label{eq:incentive-margin}
\end{equation}
For $u_{\rm dev}=0$, this reduces to $p_D>p_H$.
\end{corollary}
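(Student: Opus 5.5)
The plan is to specialize Theorem~\ref{thm:incentive-compatibility} directly. Starting from the left-hand side of~\eqref{eq:ic-condition}, I substitute the common gain $b_H=b_D=b$ and expand the two gain terms:
\[
(1-p_H)b-(1-p_D)b=(p_D-p_H)b .
\]
Adding this to the deposit term $(p_D-p_H)s$ collapses the left-hand side to $(p_D-p_H)(b+s)=\Delta$. Condition~\eqref{eq:ic-condition} therefore reads $\Delta>u_{\rm dev}$, which is~\eqref{eq:incentive-margin}. Because Theorem~\ref{thm:incentive-compatibility} is an if-and-only-if statement and the substitution is an identity, $u_{\rm dev}<\Delta$ is exactly the strict-deterrence boundary: honest execution strictly dominates the deviation precisely when the inequality holds.

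For the special case $u_{\rm dev}=0$, the condition becomes $(p_D-p_H)(b+s)>0$. I then note that $b+s>0$: by Lemma~\ref{lem:drift}, $b\ge\beta s\ge0$, since the expected compensation is nonnegative (it is built from $\lambda sF_t/(n_t-F_t)\ge0$), and the deposit satisfies $s>0$. This positivity was already used in the proof of Lemma~\ref{lem:drift}. Dividing by $b+s$ preserves the strict inequality and yields $p_D>p_H$; the converse follows by multiplying back. One small point needs checking. If either $p_H$ or $p_D$ equals $1$, the convention $b=\beta s$ from Lemma~\ref{lem:drift} must be consistent with the common-gain hypothesis. Since the affected gain term is multiplied by zero, the algebra is unaffected.

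I do not expect a genuine obstacle. The only step requiring care is justifying $b+s>0$, so that the division in the $u_{\rm dev}=0$ case is legitimate and does not flip or degenerate the inequality. Even that step relies only on facts established for Lemma~\ref{lem:drift}.
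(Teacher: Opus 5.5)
Your proposal is correct and matches the paper's proof essentially step for step: substitute $b_H=b_D=b$ into the left side of~\eqref{eq:ic-condition}, collapse it to $(p_D-p_H)(b+s)=\Delta$, inherit the equivalence from Theorem~\ref{thm:incentive-compatibility}, and for $u_{\rm dev}=0$ use $b+s>0$ to conclude $p_D>p_H$. Your justification of $b+s>0$ (via $b\ge\beta s\ge 0$ and $s>0$) and your remark on the $p=1$ convention are harmless additions; the paper simply asserts the positivity.
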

\begin{proof}
When both actions earn the same unflagged gain $b$, the two gain terms of~\eqref{eq:ic-condition} differ only through their probabilities: $(1-p_H)b-(1-p_D)b=(p_D-p_H)b$. The left side of~\eqref{eq:ic-condition} then equals $(p_D-p_H)s+(p_D-p_H)b=\Delta$, so the condition reads $u_{\rm dev}<\Delta$. With $u_{\rm dev}=0$, it requires $(p_D-p_H)(s+b)>0$, which holds exactly when $p_D>p_H$ because $s+b>0$.
\end{proof}
The common-gain condition is substantive: changing one trajectory can alter peer flags and compensation. These are one-round comparisons, not a dynamic equilibrium result. If $u_{\rm dev}\ge\Delta$, the specified deviation is not strictly deterred at that operating point.

\begin{auss}[Conditions for exclusion]
\label{as:exclusion}
An agent starts retained, re-stakes from its own balance without top-ups, and is permanently excluded below $S_{\min}\ge s$. Evidence-bearing settlements continue while it remains retained.
\end{auss}
For all possible active populations, capacity supplies the pointwise bounds
\begin{equation}
\begin{aligned}
\kappa_N&=\max_{1\le n\le N}\frac{F_{\max}(n)}{n-F_{\max}(n)},\\
c_t&\le C_{\max}=\lambda s\kappa_N,\qquad b_{\max}=\beta s+C_{\max}.
\end{aligned}
\label{eq:compensation-bound}
\end{equation}
Indeed, more flags mean more slashed stake shared among fewer recipients, so $c_t$ grows with $F_t$; Proposition~\ref{prop:capacity} caps $F_t$ at $F_{\max}(n_t)$, and the population only shrinks from $N$, so every realized $c_t$ is at most $C_{\max}$. Because this holds in every outcome, it also holds on average under any conditioning, giving $b_i(t)\le b_{\max}$ with no independence or decreasing-compensation assumption.

\begin{theorem}[Finite expected exclusion time]
\label{thm:divergence}
Under Assumption~\ref{as:exclusion}, let $\underline p$ be the infimum of the conditional flag probabilities $p_j(t)$ of Lemma~\ref{lem:drift} over all evidence-bearing settlements and all pre-settlement histories before exclusion. If $\underline p>P^*=b_{\max}/(s+b_{\max})$, the number $T_j^{\rm ex}$ of evidence-bearing settlements until agent $j$ is excluded satisfies
\begin{equation}
\mathbb E[T_j^{\rm ex}]\le
\frac{S_j(0)-S_{\min}+s}{\varepsilon},\quad
\varepsilon=\underline p s-(1-\underline p)b_{\max}>0.
\label{eq:exclusion-time}
\end{equation}
\end{theorem}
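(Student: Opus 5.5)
The argument is a standard negative-drift (supermartingale plus optional stopping) bound on the balance process, with Lemma~\ref{lem:drift} supplying the one-step drift and the pointwise bound $b_j(t)\le b_{\max}$ from~\eqref{eq:compensation-bound} making that drift uniformly negative.

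\textbf{Step 1: uniform one-step drift.} Index evidence-bearing settlements by $t=0,1,\dots$ and write $S_t=S_j(t)$ for the balance while $j$ is retained. By Lemma~\ref{lem:drift},
\[
\mathbb E[\Delta S_t\mid\mathcal F_t]=(1-p_j(t))b_j(t)-p_j(t)s .
\]
On the event that $j$ is retained, this is at most $(1-p_j(t))b_{\max}-p_j(t)s$, using $b_j(t)\le b_{\max}$. The right side is decreasing in $p_j(t)$ because $s+b_{\max}>0$. Since $p_j(t)\ge\underline p$ for every history before exclusion, the drift is at most $-\varepsilon$. The hypothesis $\underline p>P^*$ is exactly $\varepsilon>0$. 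If $p_j(t)=1$, the convention of Lemma~\ref{lem:drift} still gives drift $-s\le-\varepsilon$.

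\textbf{Step 2: supermartingale and optional stopping.} Let $T=T_j^{\rm ex}$. Assumption~\ref{as:exclusion} guarantees that evidence-bearing settlements continue while $j$ is retained, so $T$ counts exactly these settlements and is a stopping time for $(\mathcal F_t)$. Define
\[
Y_t=S_{t\wedge T}+\varepsilon\,(t\wedge T).
\]
Step 1 makes $Y_t$ a supermartingale. Its increments are bounded, since $|\Delta S|\le\max\{s,b_{\max}\}$. For each fixed $t$ the bounded stopping time $t\wedge T$ therefore gives
\[
\mathbb E[S_{t\wedge T}]+\varepsilon\,\mathbb E[t\wedge T]\le S_j(0).
\]

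\textbf{Step 3: lower bound at the stopped time.} Before exclusion the balance satisfies $S\ge S_{\min}$. The only downward move is $-s$, so at exclusion $S_T\ge S_{\min}-s$. Hence $S_{t\wedge T}\ge S_{\min}-s$ in every case. Substituting into Step 2 gives
\[
\varepsilon\,\mathbb E[t\wedge T]\le S_j(0)-S_{\min}+s.
\]
Letting $t\to\infty$ and applying monotone convergence yields~\eqref{eq:exclusion-time}. As a byproduct, $T<\infty$ almost surely.

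\textbf{Main obstacle.} The delicate point is Step 1's use of the infimum over histories. The drift bound must hold conditionally on every pre-exclusion history, not merely on average, and the compensation term has to be handled without any independence between agents. The second concern is already resolved by the pointwise bound $c_t\le C_{\max}$. What remains is to phrase the process carefully on the retained event, with frozen balances after exclusion and $S_{\min}\ge s$ keeping the re-stake feasible, so that the optional-stopping argument applies cleanly.
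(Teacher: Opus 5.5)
Your proof is correct and follows the paper's argument: the same uniform drift bound $-\varepsilon$ from $b_j(t)\le b_{\max}$ and $p_j(t)\ge\underline p$, the same floor $S_{\min}-s$ on the balance frozen at exclusion, and monotone convergence in $\mathbb E[T\wedge r]$. Your supermartingale $Y_t=S_{t\wedge T}+\varepsilon\,(t\wedge T)$ packages the paper's Step~2 compactly; the paper instead sums $\mathbb E[\mathbf 1\{T>t\}\Delta S_j(t)]\le-\varepsilon\Pr[T>t]$ over $t<r$ and reaches the same inequality $\mathbb E[\tilde S_j(r)]\le S_j(0)-\varepsilon\,\mathbb E[T\wedge r]$.
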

\begin{proof}
The idea is that agent $j$'s balance falls by at least $\varepsilon$ per settlement on average and cannot fall far below $S_{\min}$, so the expected number of settlements before exclusion is at most its distance to that floor divided by $\varepsilon$. Count only evidence-bearing settlements, write $T=T_j^{\rm ex}$, and let $\tilde S_j(t)=S_j(t\wedge T)$ be the balance frozen at exclusion.

\emph{Step 1 (average loss per settlement).} The drift $(1-p)b-ps$ in~\eqref{eq:drift} rises with the gain $b$ and falls with the flag probability $p$. While $j$ is retained, $b_j(t)\le b_{\max}$ by~\eqref{eq:compensation-bound} and $p_j(t)\ge\underline p$, so
\[
\mathbb E[\Delta S_j\mid\mathcal F_t]\le(1-\underline p)b_{\max}-\underline p s=-\varepsilon,
\]
and $\varepsilon>0$ is the same as $\underline p(s+b_{\max})>b_{\max}$, that is, $\underline p>P^*$.

\emph{Step 2 (adding up $r$ settlements).} The frozen balance is the initial balance plus the changes made while $j$ was retained, $\tilde S_j(r)=S_j(0)+\sum_{t<r}\mathbf 1\{T>t\}\Delta S_j(t)$. Whether $T>t$ is known before settlement $t$, so Step 1 gives $\mathbb E[\mathbf 1\{T>t\}\Delta S_j(t)]\le-\varepsilon\Pr[T>t]$. Since $\sum_{t<r}\Pr[T>t]=\mathbb E[T\wedge r]$,
\[
\mathbb E[\tilde S_j(r)]\le S_j(0)-\varepsilon\,\mathbb E[T\wedge r].
\]
These expectations are finite because every change lies in $[-s,b_{\max}]$.

\emph{Step 3 (the balance has a floor).} A retained balance is at least $S_{\min}$, and the settlement that excludes $j$ removes at most $s$, so $\tilde S_j(r)\ge S_{\min}-s$. With Step 2, $\varepsilon\,\mathbb E[T\wedge r]\le S_j(0)-S_{\min}+s$ for every $r$. As $r\to\infty$, $\mathbb E[T\wedge r]$ increases to $\mathbb E[T]$, which gives~\eqref{eq:exclusion-time}; in particular, $T$ is finite with probability one.
\end{proof}

\begin{corollary}[Exclusion of new gradient contributions]
\label{cor:exclusion}
If the theorem's conditions hold for every adversary in the finite initial population, all are excluded in finite expected time and thereafter supply zero direct gradient contribution. Updates use retained honest contributors if any remain.
\end{corollary}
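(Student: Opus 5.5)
The plan is to apply Theorem~\ref{thm:divergence} to each adversary separately, combine the resulting expected times over the finite set $\mathcal M$, and then use the exclusion rule of~\eqref{eq:token-update} together with the aggregation convention to conclude that the adversaries contribute nothing afterward.

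\emph{Step 1 (each adversary is excluded in finite expected time).} The hypothesis is that the conditions of Theorem~\ref{thm:divergence} hold for every $j\in\mathcal M$: Assumption~\ref{as:exclusion}, and a flag-probability infimum above $P^*$. For each such $j$ the theorem then gives
\[
\mathbb E[T_j^{\rm ex}]\le \frac{S_j(0)-S_{\min}+s}{\varepsilon_j}<\infty,
\]
and in particular $T_j^{\rm ex}<\infty$ almost surely.

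\emph{Step 2 (all adversaries together).} Let $T^{\rm all}=\max_{j\in\mathcal M}T_j^{\rm ex}$. Since $|\mathcal M|=M<N$ is finite,
\[
\max_j T_j^{\rm ex}\le\sum_j T_j^{\rm ex},
\]
so linearity of expectation yields $\mathbb E[T^{\rm all}]\le\sum_{j\in\mathcal M}\mathbb E[T_j^{\rm ex}]<\infty$. No independence between agents is needed, because linearity holds regardless. This also shows that $T^{\rm all}$ is finite with probability one.

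\emph{Step 3 (no direct contribution after exclusion).} Exclusion is permanent, and excluded balances are frozen below $S_{\min}$. Hence after $T^{\rm all}$ every adversary lies outside $\mathcal K_t$ and has aggregation weight $\omega_j(t)=0$. Its term $\omega_j(t)g_j(t)$ in the token-weighted update of Section~\ref{sec:aggregation} therefore vanishes identically. The update consequently draws only on retained agents, all of which are honest after $T^{\rm all}$. If no honest agent is retained, every weight is zero by the stated convention and the update adds no new gradient.

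\emph{Main obstacle.} The delicate point is that a frozen adversary should leave no residual influence through normalization. I would check that $\omega_j$ is computed only over $\mathcal K_t$, so that an excluded agent cannot enter a normalizing denominator. I would also make the scope of the claim explicit: it concerns \emph{direct} gradient terms only. Past influence already absorbed into $\theta$, and the adversary's continued physical presence in the environment, are not covered, as the threat model notes.

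Exclusion is also interpreted in the counting of evidence-bearing settlements, which Assumption~\ref{as:exclusion} guarantees continue while the agent is retained. Finite expected settlements therefore correspond to exclusion in finite time.
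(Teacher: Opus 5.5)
Your proposal is correct and follows the paper's proof essentially step for step: you apply Theorem~\ref{thm:divergence} to each adversary, bound $\mathbb E[\max_j T_j^{\rm ex}]$ by $\sum_j\mathbb E[T_j^{\rm ex}]$ without any independence assumption, and then use permanent exclusion and the normalization of~\eqref{eq:aggregation-weight} over the retained set only, so that each excluded agent's term drops out of $G_{t+1}$ and the update is skipped when no honest contributor remains. One cosmetic correction: the term in~\eqref{eq:robust-update} is $\omega_j\bar g_j$ with the clipped gradient, not $\omega_j g_j$.
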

\begin{proof}
Theorem~\ref{thm:divergence} gives each adversary a finite expected exclusion time, and with finitely many adversaries the last exclusion also has finite expectation, since $\mathbb E[\max_jT_j^{\rm ex}]\le\sum_j\mathbb E[T_j^{\rm ex}]$. Exclusion is permanent, so an excluded agent never rejoins the retained set: it receives weight $\omega_j=0$ in~\eqref{eq:aggregation-weight} of Section~\ref{sec:aggregation}, and its gradient term vanishes from $G_{t+1}$ in~\eqref{eq:robust-update}. After the last exclusion, only honest agents carry weight, and the update is skipped if none remain.
\end{proof}

Self-funding and exclusion are accounting requirements, whereas the history-uniform detection floor is a condition on the verifier and environment: a surviving adversary at $n\le3$, for example, contradicts a positive floor, so pooled detection rates alone cannot certify~\eqref{eq:exclusion-time}. The clock counts evidence-bearing settlements only. For an honest agent, a false-flag bound $p_H(t)\le\beta/(1+\beta)$ gives nonnegative drift but does not guarantee retention, since positive-drift processes can still cross the threshold. Section~\ref{sec:theory-evidence} reads these conditions against the evaluated configuration.

For $N=8$, $\kappa_N=0.4$ (at $n=7,F=2$), so $b_{\max}=(0.1+0.4\lambda)s$ for $\beta=0.1$, and the sufficient thresholds $P^*$ are $1/11$, $3/13$, and $1/3$ for $\lambda=0,0.5,1$. Taken together, Theorem~\ref{thm:divergence} and these values show that exclusion does not require a perfect detector: even with full compensation, $P^*$ is at most $1/3$ and holds for every retained population, so under Assumption~\ref{as:exclusion} an adversary whose conditional flag probability never drops below some $\underline p>1/3$ is excluded in finite expected time, after which Corollary~\ref{cor:exclusion} removes its direct gradient contribution.

\subsection{Token-Weighted Policy Aggregation}
\label{sec:aggregation}

After settlement $t$, normalize balances only over the new retained set $\mathcal K_{t+1}$:
\begin{equation}
\omega_i(t+1)=\begin{cases}
S_i(t+1)/\sum_{j\in\mathcal K_{t+1}}S_j(t+1),&i\in\mathcal K_{t+1},\\
0,&\text{otherwise}.
\end{cases}
\label{eq:aggregation-weight}
\end{equation}
Weights are held fixed during each learning update. To state a bounded-influence result, suppose each per-agent loss gradient $g_i=\nabla_\theta\ell_i$ is clipped to a prescribed norm limit $L_g>0$:
\begin{equation}
\bar g_i=g_i\min\!\left\{1,\frac{L_g}{\|g_i\|_2}\right\},\qquad
G_{t+1}=\sum_i\omega_i(t+1)\bar g_i.
\label{eq:robust-update}
\end{equation}
Take $\bar g_i=0$ when $g_i=0$. The optimizer consumes $G_{t+1}$; for adaptive optimizers, its parameter increment need not be a learning rate times this vector. Clipping makes every $\|\bar g_i\|_2\le L_g$, so by the triangle inequality the direct pre-optimizer contribution of any contributor set $\mathcal B$ is at most $L_g$ times its total weight:
\begin{equation}
\left\|\sum_{i\in\mathcal B}\omega_i(t+1)\bar g_i\right\|_2
\le L_g\sum_{i\in\mathcal B}\omega_i(t+1).
\label{eq:honest-update-limit}
\end{equation}
Exclusion makes an agent's coefficient exactly zero, while prior influence in model parameters or optimizer state can persist. If no contributors remain, the update is skipped. The norm bound is a sufficient design condition; the evaluation does not establish a numerical per-agent gradient bound.

\section{Implementation and Blockchain Integration}
\label{sec:implementation}

\subsection{Smart Contract Execution Pipeline}
\label{sec:contract-pipeline}

The architecture separates the blockchain coordination layer from the SCRIMP learner. The contracts correspond to the stages in Fig.~\ref{fig:workflow} and provide persistent records at their interfaces. \texttt{BatchDataStorage} receives sampled interaction records $(o_{i,t},a_{i,t},r_{i,t},m_{i,t-1})$, associated with the agent, episode and policy version. The \texttt{OutputHead}, \texttt{TDError}, \texttt{TrainingGAE} and \texttt{PolicyLoss} interfaces anchor policy/value outputs, temporal-difference residuals, advantages and policy-update information, respectively. Their names identify the computations being recorded; neural inference and optimization remain off-chain.

At episode end, \texttt{MaliciousDetection} uses recommended and executed behavior to form the cumulative residuals and population-relative flags of Section~\ref{sec:detection}. \texttt{IncentiveMechanism} consumes those flags, applies deposit forfeiture, rewards, compensation and burning, and updates balances and eligibility. This state determines the weights for the subsequent policy update. Keeping the detection result and its economic consequence in the same episode sequence prevents an update from combining flags from one round with balances from another. Module boundaries organize access to shared state; they do not, by themselves, establish a gas advantage over a single-contract implementation.

\subsection{On-Chain and Off-Chain Function Allocation}
\label{sec:on-off-chain}

Observation encoding, Transformer communication, the recurrent output head, temporal-difference and generalized advantage estimation, PPO losses, gradient computation and the final token-weighted update run in the learner. General neural computation is unsuitable for direct execution in the EVM~\cite{wohrer2018smart}. The coordination layer instead records the behavioral evidence, detector outputs, economic state and policy commitments needed to associate an update with its inputs.

A policy hash identifies a particular parameter snapshot, while interaction commitments identify the records used for an episode. This supports checking whether two participants refer to the same data and model. A hash is not a proof that the committed model produced an action, that a submitted loss or gradient was calculated correctly, or that physical telemetry is truthful. Such checks require the corresponding records and an agreed observation source. Likewise, a signed submission authenticates its sender without establishing the accuracy of its contents. These distinctions separate tamper-evident coordination from behavioral verification.

Tuple submission can overlap off-chain computation, but the learner must use a consistent policy version and economic state. An update that depends on an unsettled balance transition must wait for the required state; asynchronous submission alone does not establish an end-to-end latency bound.

\subsection{Solidity Implementation and Geth Architecture}
\label{sec:geth-deployment}

The software stack comprises Solidity~0.8.0 contracts, Geth~1.13.14 execution nodes, and a Python~3.9.21 learner connected through Web3.py~7.4.0 and HTTP-RPC. Truffle provides contract compilation/deployment interfaces and the application binary interface used for submissions and state queries~\cite{dhanvardini2023development}. The intended placement is a private Ethereum network coordinating authorized participants. Execution-client configuration, validator operation and the chosen finality policy are separate deployment concerns; the module description does not establish consensus performance.

Contract arithmetic represents numerical quantities using scale $10^6$, giving six decimal places for scaled values. Detector decisions must preserve the strict comparison in Section~\ref{sec:detection}, including ties. For $q$ eligible agents, let $Q=\sum_iD_i$, $B=\sum_iD_i^2$ and $x_i=qD_i-Q$, with both sums over eligible agents. Squaring clears the standard deviation from the test of Section~\ref{sec:detection}, which at $k=1.5$ becomes the division-free comparison $x_i>0$ and $4x_i^2>9(qB-Q^2)$, avoiding a rounded mean or standard deviation in the decision. The same eligibility set defines the balance sum used for normalization. Token arithmetic must account for division residue explicitly so that distributed compensation and burned amounts reconcile with the slashed deposits.

\subsection{Workload and IoT Placement}
\label{sec:system-overhead}

With $K$ sampling boundaries and $N$ agents, interaction collection produces $NK$ agent records. At a 32-step interval and a 256-step episode limit, $K\leq8$. Under one submission per sampled record, collection plus separate detection and incentive calls requires at most $8N+2$ requests for those operations. This count excludes deployment, deposit handling, additional policy-stage commitments and retries; it is not a measurement of the complete transaction workload. Record volume scales with $NK$, and balance accounting scans the participating agents. Gas consumption, block capacity and communication delay additionally depend on contract storage, payloads and network configuration.

For a multi-operator automated guided vehicle fleet, coordinator nodes can host the learner and blockchain interfaces while vehicles execute the transferred policy locally. Recommended and measured vehicle positions supply the behavioral comparison. The chain coordinates training and its economic consequences, rather than serving each navigation decision. Deployment suitability therefore depends on both trustworthy telemetry and device-specific control requirements; neither an edge-control timing budget nor a network-throughput claim follows from the MAPF task results.

\section{Performance Evaluation}\label{sec:performance-evaluation}

\subsection{Experimental Setup}\label{sec:experimental-setup}

\emph{Task.} MAPF represents coordinated navigation in applications such as automated guided vehicle fleets, intelligent transportation and drone swarms~\cite{alkazzi2024comprehensive,zong2022mapdp}. We instantiate BE-CTDE on the SCRIMP backbone~\cite{wang2023scrimp}: agents choose among four cardinal movements and staying, using local obstacle, goal and peer observations with recurrent Transformer-based communication. This task fixes the outcome map of Section~\ref{sec:detection} to the successor grid cell under Euclidean distance, with the most probable feasible move as the reference action. The residual is then $d_{i,u}=\|p^{\rm exec}_{i,u}-p^{*}_{i,u}\|_2$, the distance in grid cells between the executed cell $p^{\rm exec}_{i,u}$ and the reference cell $p^{*}_{i,u}$, leaving the mechanism unchanged from its general form. Rewards penalize collisions and idling and reward goal achievement and exploration. The overt adversarial behaviors follow the AMAPF setting~\cite{ivanova2021adversarial}: static blocking, occupation of goal regions, and stagnant or repetitive movement. Their effect on the residual depends on the realized trajectory and the active population; persistence alone does not establish detectability.

\emph{Setup.} Training uses eight agents on a $40\times40$ grid with randomly assigned unique starts and goals, at most $L=256$ steps per episode, and triangular obstacle density with minimum zero, mode 33\% and maximum 50\%. Training runs for 1,838 episodes. Residuals are sampled every $h=32$ steps, giving at most eight samples per episode, and are settled at episode end with sensitivity $k=1.5$. Testing covers $N\in\{8,32,64,128\}$ and obstacle densities 0\%, 15\% and 30\%, with 100 episodes per cell. Larger populations transfer the $N=8$ policy without retraining. The computing platform is an Intel Core i7-13700 with 32~GB memory and an RTX~3090~Ti.

Table~\ref{tab:simulation-params} summarizes the MAPF setup. We examine two test conditions: adversaries remain present in the adversarial-testing comparison, while all agents behave honestly in the compensation study. Both conditions use the same backbone, training population, map family and evaluation grid.

\begin{table}[t]
\caption{Simulation Parameters}
\label{tab:simulation-params}
\centering\footnotesize
\setlength{\tabcolsep}{4pt}
\renewcommand{\arraystretch}{1.04}
\begin{tabular}{@{}>{\raggedright\arraybackslash}p{.42\columnwidth}>{\raggedright\arraybackslash}p{.53\columnwidth}@{}}
\toprule
Parameter & Value \\
\midrule
\multicolumn{2}{@{}l}{\emph{Training environment}} \\
Grid; number of agents & $40\times40$; 8 \\
Start and goal assignment & Random unique positions \\
Training adversaries & 3 or 5 of 8 agents \\
Obstacle density & Triangular $(0,0.33,0.50)$ \\
Episodes; maximum steps & 1,838; 256 per episode \\
\midrule
\multicolumn{2}{@{}l}{\emph{Detection and token mechanism}} \\
Residual sampling & Every 32 steps \\
Detection and settlement & At episode end \\
Sensitivity; reward factor & $k=1.5$; $\beta=0.1$ \\
Deposit; exclusion threshold & $s=1$; $S_{\min}=0.1s$ \\
Compensation factor & $\lambda\in\{0,0.5,1\}$ \\
\midrule
\multicolumn{2}{@{}l}{\emph{Evaluation environment}} \\
Grid; agent populations & $40\times40$; 8, 32, 64, 128 \\
Obstacle densities & 0\%, 15\%, 30\% \\
Episodes per condition & 100 \\
Adversarial testing & 37.5\% or 62.5\% adversaries \\
Honest testing & All agents follow the policy \\
\bottomrule
\end{tabular}
\end{table}

\emph{Metrics and baselines.} We report collision rate (CO) and success rate (SR). CO is defined as honest-agent obstacle/boundary collisions per honest-agent time step; agent--agent contacts are separate. SR is the percentage of test episodes in which all honest agents reach their goals within the time limit:
\begin{equation}
\mathrm{CO}=100\frac{N_{\mathrm{collision}}}{\sum_e T_eN_{\mathrm{honest},e}},\qquad
\mathrm{SR}=100\frac{N_{\mathrm{success}}}{E}.
\label{eq:success-rate}
\end{equation}
CO and SR are interpreted jointly: low collision frequency alone cannot distinguish safe progress from stalled navigation. Training goal arrivals provide additional context for learned behavior.

The baselines add the elements of the closed loop one at a time on a common backbone, and several of them instantiate the defining mechanism of a closely related design. \emph{SCRIMP}~\cite{wang2023scrimp}, a state-of-the-art learning-based MAPF method and the backbone of BE-CTDE, is trained without any defense. \emph{On-Chain CTDE} adds the blockchain coordination pipeline without detection or incentives; as in LearnChain~\cite{sami2024learnchain}, the chain records training but leaves every agent's influence unchanged. \emph{BE-CTDE Lite} adds the detector of Section~\ref{sec:detection} without the token economy, so flags are recorded but flagged agents still contribute equally. Token-enabled BE-CTDE adds the token economy of Section~\ref{sec:incentive} and balance-weighted aggregation, with $\lambda\in\{0,0.5,1\}$. It shares contract-automated penalties and rewards with Tian~\cite{tian2025blockchain}, and its $\lambda=0$ variant burns all slashed stake as in burn-based incentive designs~\cite{zhao2025blockchain}; unlike both, it also turns balances into aggregation weights. Instantiating these mechanisms on one backbone, rather than reproducing each full algorithm, keeps every comparison within a single training and evaluation protocol.

\subsection{Performance Under Adversarial Training and Testing}\label{sec:results-adversarial}
The adversarial-testing condition places adversaries in both training and testing, using malicious-agent ratios of 37.5\% and 62.5\%. Metrics concern honest agents. The larger-$N$ rows are zero-shot population transfers, and test-time policy execution requires no chain interaction. Table~\ref{tab:results-adversarial} reports CO and SR.

\input{revision_generated/restored_original_adversarial_table.tex}

\emph{Minority adversaries.} Full BE-CTDE achieves 100\% SR at $N=8$ across all densities, compared with 96--97\% for Lite and zero for SCRIMP. At $N=32$ and 30\% obstacles, its CO is 23.09\% versus Lite's 27.88\%, a 4.79-percentage-point (pp) reduction; SR rises from 92\% to 97\%. At $N=128$, increasing congestion in the fixed grid reduces SR, but full BE-CTDE retains the advantage over Lite.

\emph{Majority adversaries.} Full BE-CTDE achieves SR of at least 93\% at $N\le32$. At $N=64$ and 15\% obstacles, CO falls from Lite's 37.10\% to 24.00\%, the largest reduction at 13.10~pp (35.3\%), while SR increases from 89\% to 95\%. Compared with SCRIMP, the relative CO reductions are 66.4\% at $N=32$, 15\% obstacles, and 81.0\% at $N=64$, zero obstacles. These comparisons describe task performance; the contrasts do not isolate every implementation difference between methods.

Full BE-CTDE has lower CO and higher SR than Lite in all 24 ratio/population/density cells. The largest SR difference is 7~pp. This consistency across test conditions is useful evidence of the task-level benefit, but the cells share trained policies and are not independent training replications. Under binomial sampling of 100 test episodes, Wilson 95\% intervals have approximate half-widths of 3.7~pp at 97\% SR and 5.2~pp at 93\% SR; these describe individual success proportions, not a confidence interval for a paired method difference. The available experiment records do not establish independent training-seed variation.

\subsection{Effect of Compensation Factor}\label{sec:results-compensation}
The compensation factor acts only through training: it changes balances and, through them, the aggregation weights of~\eqref{eq:aggregation-weight} and the timing of exclusion, whereas test-time execution involves no chain interaction. Its effect therefore resides entirely in the learned policy. Adversaries present at test time would confound this effect with interference that no training-time mechanism removes, since economic exclusion (Section~\ref{sec:incentive}) withdraws training influence without removing the agent itself. The honest-testing condition removes this interference by design: adversaries take part only in training, and every test agent follows the transferred policy. Varying $\lambda$ under this condition, within the same MAPF evaluation protocol, examines the compensation effect on the learned policy and complements the adversarial-testing comparison of Section~\ref{sec:results-adversarial}, which retains test-time adversaries. Figure~\ref{fig:m3_performance} shows policy-loss and goal-arrival traces for three adversaries during training.

\begin{figure}[!tb]
\centering
\subfloat[Policy loss]{\includegraphics[width=.49\columnwidth]{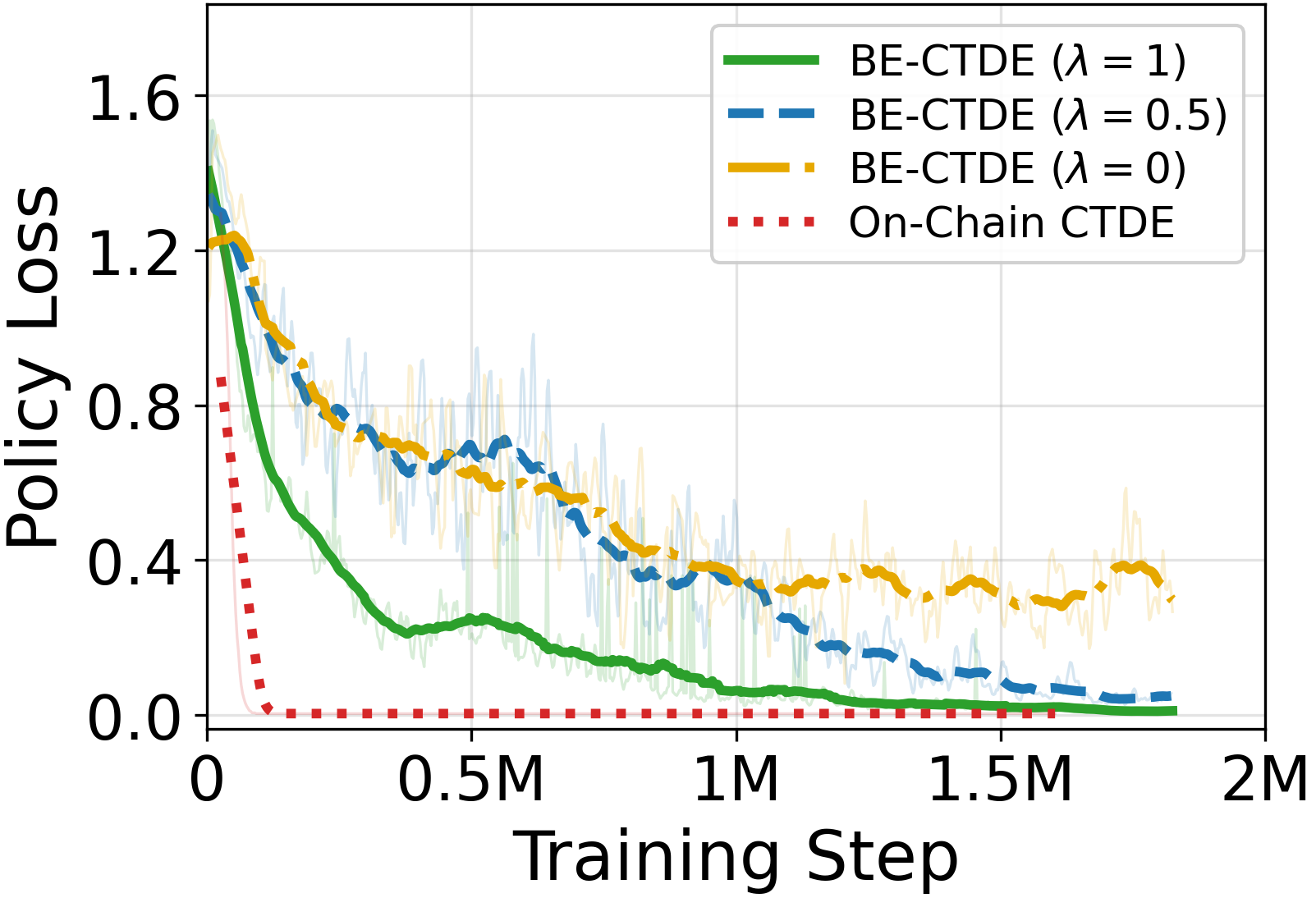}\label{fig:m3_policyloss}}\hfill
\subfloat[Goal arrivals]{\includegraphics[width=.49\columnwidth]{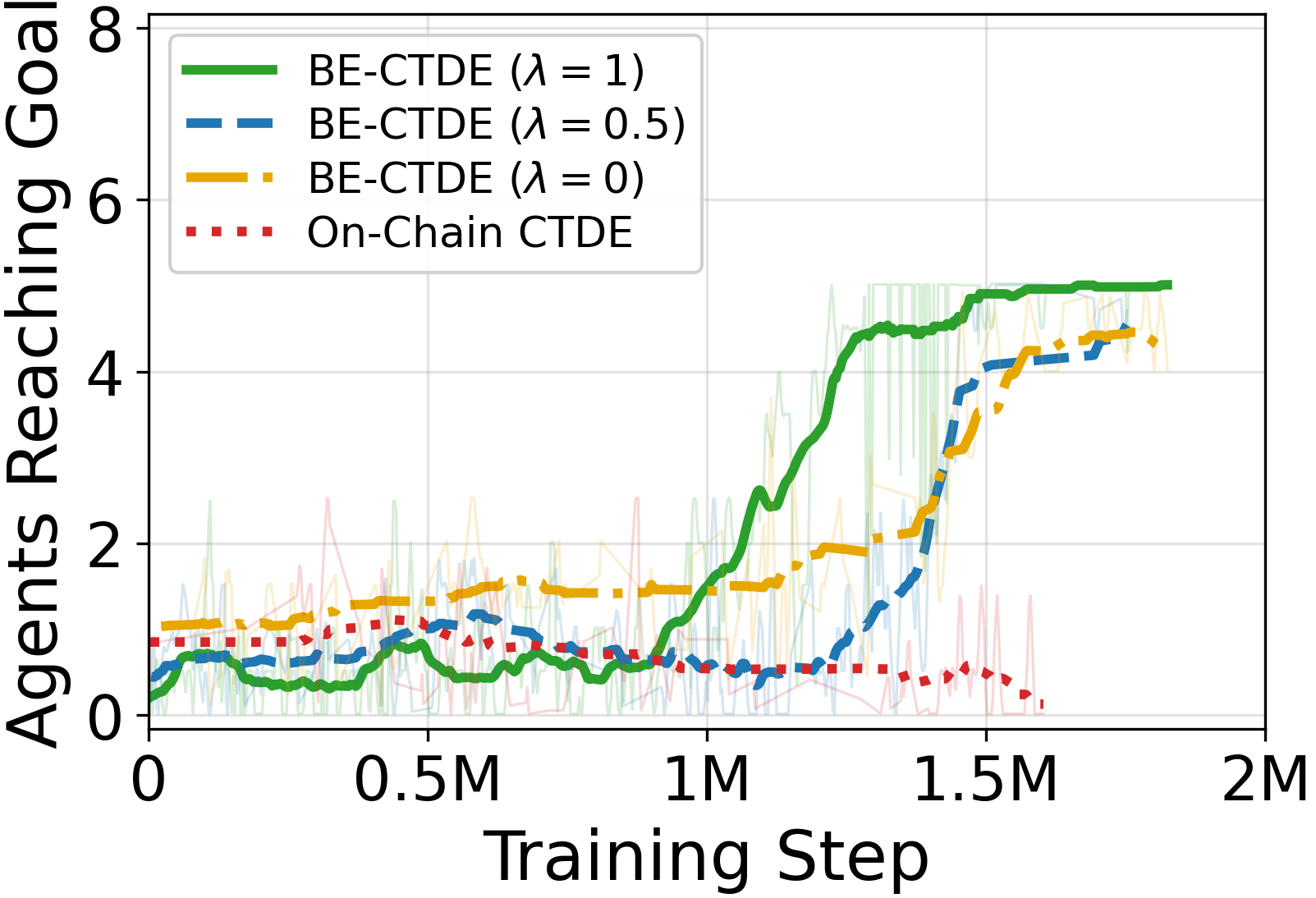}\label{fig:m3_goal}}
\caption{Policy-loss and goal-arrival curves with three malicious agents during training. The horizontal axis is the recorded training-step counter; the curves measure learning progress, not exclusion events.}
\label{fig:m3_performance}
\end{figure}

The loss curves show faster decline for the compensated variants, while the goal-arrival curves separate task progress from loss convergence alone. In particular, arrivals approaching five are consistent with the five honest agents learning useful behavior, but do not establish when, or whether, the three malicious contributors were excluded. Learning to navigate interference can also change arrivals.

\input{revision_generated/restored_original_compensation_table.tex}

Table~\ref{tab:results-compensation} reports CO under honest testing. At $N=8$, 30\% obstacles and a 37.5\% training-time malicious ratio, full compensation reduces CO by 8.84~pp relative to burn-only. The largest relative reduction is 43.1\%, from 20.38\% to 11.59\% at $N=32$, 15\% obstacles and the same ratio. It improves on burn-only in all 24 cells and attains the lowest CO among the three token variants in 23/24; the exception is $N=32$, 62.5\% training adversaries and zero obstacles, where $\lambda=.5$ gives 14.86\% versus 15.19\% for $\lambda=1$. This distinction avoids conflating the pairwise burn-only comparison with the three-variant ranking.

\begin{figure}[!tb]
\centering
\includegraphics[width=\columnwidth,trim=7bp 7bp 7bp 7bp,clip]{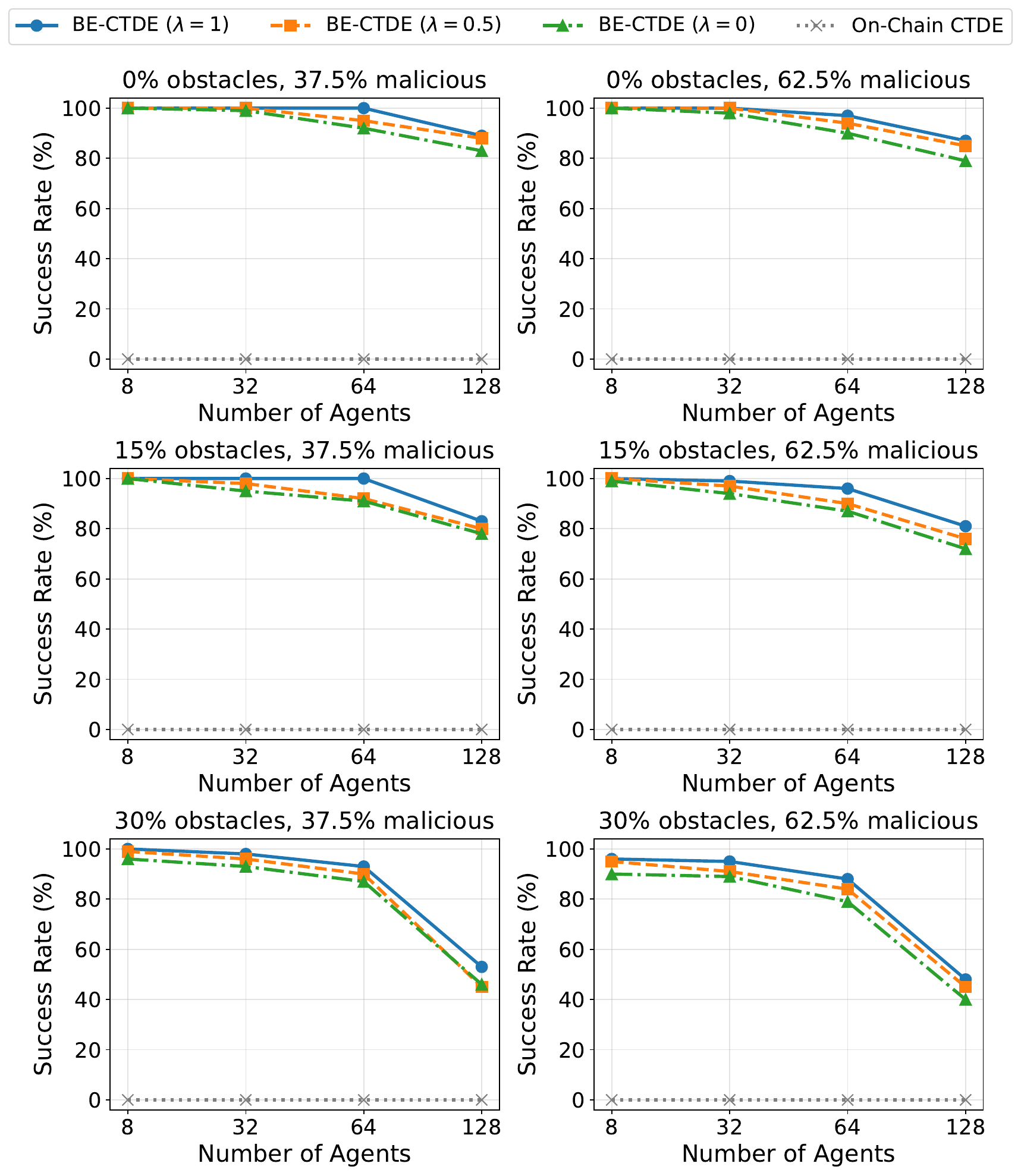}
\caption{SR comparison with adversaries during training and honest testing. Rows show obstacle densities 0\%, 15\% and 30\%; columns show training-time malicious ratios 37.5\% and 62.5\%.}
\label{fig:sr_comparison}
\end{figure}

Figure~\ref{fig:sr_comparison} reports the corresponding SR comparison. It shows the compensation advantage alongside the decline at larger populations and higher obstacle density. The honest-testing and joint-adversarial-testing results address different transfer conditions and should not be combined into a single matched ablation. Likewise, low On-Chain CTDE collision values at small populations should be interpreted with its goal-arrival behavior, without treating passivity as demonstrated solely by a collision rate.

\subsection{Relation to the Analysis}\label{sec:theory-evidence}
The analysis of Section~\ref{sec:detection-incentive} bears on the evaluated configuration in three ways. \emph{Capacity.} With $N=8$ and $k=1.5$, Proposition~\ref{prop:capacity} permits at most two flags per settlement, so no settlement can flag all three minority or all five majority adversaries, and any exclusion proceeds over several settlements. At $\lambda=1$ the sufficient threshold $P^*=1/3$ of Theorem~\ref{thm:divergence} lies below both capacity ceilings, $2/3$ and $2/5$, so exclusion is attainable at both ratios, with a wide margin for the minority and almost none for the majority. In the majority regime the mechanism is therefore expected to act mainly through the balance-proportional weights of~\eqref{eq:aggregation-weight}, which cap the retained adversaries' direct contribution by~\eqref{eq:honest-update-limit}, rather than through completed removal. This is consistent with the lower majority-ratio SR in Table~\ref{tab:results-adversarial}(b).

\emph{Compensation.} Compensation raises the unflagged gain $b$ of Lemma~\ref{lem:drift} from $\beta s$ to $\beta s+\lambda sF_t/(n_t-F_t)$, so balances, and with them the weights of~\eqref{eq:aggregation-weight}, shift toward unflagged agents faster as $\lambda$ grows. The evaluated adversaries follow fixed disruptive behaviors, so this reweighting channel, rather than the deterrence of Theorem~\ref{thm:incentive-compatibility}, is what the compensation study exercises; the monotone $\lambda$ ordering in Table~\ref{tab:results-compensation} and the faster loss decline in Fig.~\ref{fig:m3_policyloss} are consistent with it. Capacity also limits the channel: at the initial population, at least three of the five majority adversaries remain unflagged at every settlement and receive compensation. Averaged over the twelve cells, $\lambda=1$ reduces CO relative to $\lambda=0$ by $5.4$~pp at the $37.5\%$ training ratio and by $3.8$~pp at $62.5\%$, in the direction this leakage predicts.

\emph{Exclusion and retention.} Lemma~\ref{lem:drift} holds at every evidence-bearing settlement whatever $S_{\min}$ is, but the reported $S_{\min}=0.1s$ lies outside Assumption~\ref{as:exclusion}: a retained balance below $s$ cannot fund the next deposit. If such a balance ends participation, $s$ is the effective retention threshold and~\eqref{eq:exclusion-time} applies with $S_{\min}$ replaced by $s$, giving $\mathbb E[T^{\rm ex}_j]\le S_j(0)/\varepsilon$, at most $4S_j(0)$ settlements for $\underline p=0.5$ and $\lambda=1$; otherwise the bound remains a conditional design result for these trials. Corollary~\ref{cor:exclusion} describes the regime in which updates draw on retained honest contributors. The $100\%$ SR at $N=8$ in Table~\ref{tab:results-adversarial} and the goal-arrival plateau at five, the honest count, in Fig.~\ref{fig:m3_goal} are the outcomes that regime produces; they are consistent with it without identifying when, or whether, individual adversaries were excluded. Nonnegative honest expected drift likewise does not rule out a losing trajectory.

\section{Conclusion}
\label{sec:discussion-conclusion}
\label{sec:future-work}
BE-CTDE makes blockchain an active training-management layer for cooperative learning. A trajectory residual links executed behavior to an auditable token mechanism, and retained balances set each agent's contribution to the shared policy update, so behavioral verification, economic accountability, and learning influence operate within one CTDE workflow. Experiments on multi-agent path finding show that BE-CTDE succeeds far more often under attack than the state-of-the-art SCRIMP planner, including with an adversarial majority, and consistently collides less than variants that only flag deviations or that burn forfeited stake instead of redistributing it. Future work will extend BE-CTDE to broader threat models, further task domains, and deployment in multi-operator IoT systems.

\bibliographystyle{IEEEtran}
\bibliography{ref}


\end{document}

%% file: revision_generated/detector_capacity_table.tex
\begin{table}[t]
\centering\small
\setlength{\tabcolsep}{5pt}
\caption{Initial detector capacity at $k=1.5$. The last columns bound the realized fraction of adversaries flagged in one round at the stated initial ratios; they are not conditional detection guarantees.}
\label{tab:detector-capacity}
\begin{tabular}{@{}cccc@{}}
\toprule
$N$ & $F_{\max}(N)$ & $M/N=3/8$ & $M/N=5/8$ \\
\midrule
8 & 2 & 0.6667 & 0.4000 \\
32 & 9 & 0.7500 & 0.4500 \\
64 & 19 & 0.7917 & 0.4750 \\
128 & 39 & 0.8125 & 0.4875 \\
\bottomrule
\end{tabular}
\end{table}

%% file: revision_generated/restored_original_adversarial_table.tex
\begin{table*}[t]
\caption{Performance under adversarial training and testing. Panels show (a) $37.5\%$ and (b) $62.5\%$ malicious agents; metric columns correspond to obstacle densities $0\%$, $15\%$, and $30\%$. CO and SR refer to honest agents. Each condition reports 100 evaluation episodes. Bold highlights full BE-CTDE.}
\label{tab:results-adversarial}
\centering
\footnotesize
\renewcommand{\arraystretch}{1.05}
\setlength{\tabcolsep}{3pt}
\begin{minipage}[t]{0.49\textwidth}
\centering
\textbf{(a) $37.5\%$ malicious agents}\\[2pt]
\begin{tabular}{@{}c l c c c c c c@{}}
\toprule
& & \multicolumn{3}{c}{\textbf{CO}~$\downarrow$ (\%)} & \multicolumn{3}{c}{\textbf{SR}~$\uparrow$ (\%)} \\
\cmidrule(lr){3-5}\cmidrule(lr){6-8}
$N$ & Method & $0\%$ & $15\%$ & $30\%$ & $0\%$ & $15\%$ & $30\%$ \\
\midrule
\multirow{3}{*}{8}   & SCRIMP           & 2.47  & 3.12  & 3.96  & 0   & 0   & 0   \\
                     & BE-CTDE Lite     & 10.81 & 19.36 & 23.18 & 97  & 97  & 96  \\
                     & \textbf{BE-CTDE} & \textbf{8.77}  & \textbf{17.12} & \textbf{21.82} & \textbf{100} & \textbf{100} & \textbf{100} \\
\midrule
\multirow{3}{*}{32}  & SCRIMP           & 10.92 & 17.26 & 20.73 & 0   & 0   & 0   \\
                     & BE-CTDE Lite     & 14.21 & 21.15 & 27.88 & 97  & 94  & 92  \\
                     & \textbf{BE-CTDE} & \textbf{10.74} & \textbf{17.13} & \textbf{23.09} & \textbf{100} & \textbf{99}  & \textbf{97} \\
\midrule
\multirow{3}{*}{64}  & SCRIMP           & 12.17 & 18.78 & 24.74 & 0   & 0   & 0   \\
                     & BE-CTDE Lite     & 15.54 & 22.68 & 29.83 & 94  & 94  & 89  \\
                     & \textbf{BE-CTDE} & \textbf{11.10} & \textbf{17.60} & \textbf{23.76} & \textbf{100} & \textbf{98}  & \textbf{91} \\
\midrule
\multirow{3}{*}{128} & SCRIMP           & 27.93 & 29.45 & 36.17 & 0   & 0   & 0   \\
                     & BE-CTDE Lite     & 18.06 & 23.97 & 39.54 & 79  & 75  & 48  \\
                     & \textbf{BE-CTDE} & \textbf{16.59} & \textbf{20.22} & \textbf{37.31} & \textbf{86}  & \textbf{81}  & \textbf{52} \\
\bottomrule
\end{tabular}
\end{minipage}\hfill
\begin{minipage}[t]{0.49\textwidth}
\centering
\textbf{(b) $62.5\%$ malicious agents}\\[2pt]
\begin{tabular}{@{}c l c c c c c c@{}}
\toprule
& & \multicolumn{3}{c}{\textbf{CO}~$\downarrow$ (\%)} & \multicolumn{3}{c}{\textbf{SR}~$\uparrow$ (\%)} \\
\cmidrule(lr){3-5}\cmidrule(lr){6-8}
$N$ & Method & $0\%$ & $15\%$ & $30\%$ & $0\%$ & $15\%$ & $30\%$ \\
\midrule
\multirow{3}{*}{8}   & SCRIMP           & 50.46 & 68.55 & 71.94 & 0   & 0   & 0  \\
                     & BE-CTDE Lite     & 16.95 & 29.80 & 40.23 & 95  & 93  & 90 \\
                     & \textbf{BE-CTDE} & \textbf{12.07} & \textbf{25.59} & \textbf{36.13} & \textbf{100} & \textbf{99}  & \textbf{94} \\
\midrule
\multirow{3}{*}{32}  & SCRIMP           & 78.51 & 82.40 & 83.63 & 0   & 0   & 0  \\
                     & BE-CTDE Lite     & 17.13 & 32.91 & 39.02 & 96  & 94  & 89 \\
                     & \textbf{BE-CTDE} & \textbf{15.79} & \textbf{27.68} & \textbf{37.41} & \textbf{100} & \textbf{97}  & \textbf{93} \\
\midrule
\multirow{3}{*}{64}  & SCRIMP           & 89.46 & 92.55 & 93.42 & 0   & 0   & 0  \\
                     & BE-CTDE Lite     & 24.76 & 37.10 & 44.24 & 91  & 89  & 78 \\
                     & \textbf{BE-CTDE} & \textbf{17.03} & \textbf{24.00} & \textbf{37.61} & \textbf{97}  & \textbf{95}  & \textbf{85} \\
\midrule
\multirow{3}{*}{128} & SCRIMP           & 90.97 & 92.79 & 95.01 & 0   & 0   & 0  \\
                     & BE-CTDE Lite     & 33.69 & 41.71 & 54.11 & 79  & 75  & 34 \\
                     & \textbf{BE-CTDE} & \textbf{28.95} & \textbf{35.08} & \textbf{51.62} & \textbf{84}  & \textbf{77}  & \textbf{38} \\
\bottomrule
\end{tabular}
\end{minipage}
\end{table*}

%% file: revision_generated/restored_original_compensation_table.tex
\begin{table}[t]
\caption{Test-time collision rate (CO, \%) with adversaries during training only and honest agents during testing. Columns under each training ratio correspond to obstacle densities $0\%$, $15\%$, and $30\%$. $\lambda=0$ is the burn-only BE-CTDE variant.}
\label{tab:results-compensation}
\centering
\footnotesize
\renewcommand{\arraystretch}{1.05}
\setlength{\tabcolsep}{2.25pt}
\begin{tabular}{@{}c l c c c c c c@{}}
\toprule
& & \multicolumn{3}{c}{\textbf{37.5\% malicious agents}} & \multicolumn{3}{c}{\textbf{62.5\% malicious agents}} \\
\cmidrule(lr){3-5}\cmidrule(lr){6-8}
$N$ & Method & $0\%$ & $15\%$ & $30\%$ & $0\%$ & $15\%$ & $30\%$ \\
\midrule
\multirow{4}{*}{8}   & On-Chain CTDE                     & 2.45  & 3.09  & 3.67  & 49.67 & 58.50 & 69.72 \\
                     & BE-CTDE $\lambda{=}0$             & 9.52  & 18.14 & 29.63 & 13.51 & 29.16 & 37.08 \\
                     & BE-CTDE $\lambda{=}0.5$           & 7.81  & 16.35 & 25.82 & 13.06 & 26.22 & 37.42 \\
                     & \textbf{BE-CTDE $\lambda{=}1$}    & \textbf{7.42}  & \textbf{15.30} & \textbf{20.79} & \textbf{11.96} & \textbf{24.84} & \textbf{35.20} \\
\midrule
\multirow{4}{*}{32}  & On-Chain CTDE                     & 10.41 & 15.83 & 18.69 & 74.04 & 79.71 & 82.53 \\
                     & BE-CTDE $\lambda{=}0$             & 13.17 & 20.38 & 31.02 & 17.51 & 29.43 & 39.67 \\
                     & BE-CTDE $\lambda{=}0.5$           & 11.81 & 16.70 & 26.13 & \textbf{14.86} & 26.55 & 37.82 \\
                     & \textbf{BE-CTDE $\lambda{=}1$}    & \textbf{9.66}  & \textbf{11.59} & \textbf{20.34} & 15.19 & \textbf{25.76} & \textbf{31.12} \\
\midrule
\multirow{4}{*}{64}  & On-Chain CTDE                     & 11.89 & 16.54 & 23.71 & 80.95 & 83.61 & 87.49 \\
                     & BE-CTDE $\lambda{=}0$             & 16.25 & 19.94 & 26.81 & 20.91 & 23.28 & 37.42 \\
                     & BE-CTDE $\lambda{=}0.5$           & 14.07 & 19.35 & 24.08 & 17.39 & 22.72 & 34.58 \\
                     & \textbf{BE-CTDE $\lambda{=}1$}    & \textbf{10.91} & \textbf{15.27} & \textbf{21.52} & \textbf{15.43} & \textbf{21.32} & \textbf{32.97} \\
\midrule
\multirow{4}{*}{128} & On-Chain CTDE                     & 25.41 & 29.59 & 32.73 & 81.48 & 86.16 & 89.53 \\
                     & BE-CTDE $\lambda{=}0$             & 17.35 & 26.94 & 38.16 & 29.18 & 32.28 & 43.56 \\
                     & BE-CTDE $\lambda{=}0.5$           & 14.52 & 25.74 & 38.02 & 26.91 & 29.03 & 43.15 \\
                     & \textbf{BE-CTDE $\lambda{=}1$}    & \textbf{14.28} & \textbf{20.04} & \textbf{34.92} & \textbf{25.66} & \textbf{27.49} & \textbf{40.15} \\
\bottomrule
\end{tabular}
\end{table}